\newtheorem{theorem}{Theorem}
\newtheorem{corollary}[theorem]{Corollary}
\newenvironment{proof}[1][Proof]{\noindent\textbf{#1.} }{\ \rule{0.5em}{0.5em}}
\newtheorem{mylemma}{Lemma}
\newtheorem{mytheorem}{Theorem}
\newcommand{\bes} {\begin{subequations}}
\newcommand{\ees} {\end{subequations}}
\newcommand{\bea} {\begin{eqnarray}}
\newcommand{\eea} {\end{eqnarray}}
\newcommand{\beq}{\begin{equation}}
\newcommand{\eeq}{\end{equation}}
\newcommand{\mc}{\mathcal}
\def\>{\rangle}
\def\<{\langle}
\def\Tr{\textrm{Tr}}
\def\tot{\textrm{tot}}
\def\sys{\textrm{S}}
\def\inter{\textrm{I}}
\def\B{\textrm{R}}
\def\SB{\textrm{SR}}
\def\id{\textrm{id}}
\def\IS{\textrm{IS}}
\def\leak{\textrm{leak}}
\def\tru{\textrm{trunc}}
\def\dec{\textrm{dec}}
\renewcommand{\min}{\textrm{min}}
\renewcommand{\max}{\textrm{max}}
\newcommand{\ketbra}[2]{|{#1}\>\!\<#2|}
\newcommand{\bracket}[2]{\<{#1}|{#2}\>}
\newcommand{\ignore}[1]{}
\begin{document}

\title{Quantum speed limits for leakage and decoherence}

\begin{abstract}
We introduce state-independent, non-perturbative Hamiltonian quantum speed limits for population leakage and fidelity loss, for a gapped open system interacting with a reservoir.  These results hold in the presence of initial correlations between the system and the reservoir, under the sole assumption that their interaction and its commutator with the reservoir Hamiltonian are norm-bounded. The reservoir need not be thermal and can be time-dependent. We study the significance of energy mismatch between the system and the local degrees of freedom of the reservoir which directly interact with the system. We demonstrate that, in general, by increasing the system gap we may reduce this energy mismatch, and consequently drive the system and the reservoir into resonance, which can accelerate fidelity loss, irrespective of the thermal properties or state of the reservoir. This implies that quantum error suppression strategies based on increasing the gap are not uniformly beneficial.  Our speed limits also yield an elementary lower bound on the relaxation time of spin systems.
\end{abstract}

\begin{abstract}
We introduce state-independent, non-perturbative Hamiltonian quantum speed limits for population leakage and fidelity loss, for a gapped open system interacting with a reservoir.  These results hold in the presence of initial correlations between the system and the reservoir, under the sole assumption that their interaction and its commutator with the reservoir Hamiltonian are norm-bounded. The reservoir need not be thermal and can be time-dependent. We study the significance of energy mismatch between the system and the local degrees of freedom of the reservoir which directly interact with the system. We demonstrate that, in general, by increasing the system gap we may reduce this energy mismatch, and consequently drive the system and the reservoir into resonance, which can accelerate fidelity loss, irrespective of the thermal properties or state of the reservoir. This implies that quantum error suppression strategies based on increasing the gap are not uniformly beneficial.  Our speed limits also yield an elementary lower bound on the relaxation time of spin systems.
\end{abstract}

\author{Iman Marvian}
\affiliation{Center for Quantum Information Science and \& Technology}
\affiliation{Department of Physics}
\author{Daniel A. Lidar}
\affiliation{Center for Quantum Information Science and \& Technology}
\affiliation{Department of Physics}
\affiliation{Department of Electrical Engineering}
\affiliation{Department of Chemistry\\University of Southern California, Los Angeles, California 90089, USA}
\maketitle

Quantum speed limits (QSLs) answer the fundamental question of how fast a quantum system can evolve, and have numerous applications, e.g., in quantum computation, control, and metrology. Traditionally, they characterize the minimum amount of time required for a quantum state of a closed quantum system to evolve to an orthogonal state. Mandelstam \& Tamm (MT) \cite{QSL_MT} first showed that this time is lower bounded by the inverse of the standard deviation of the Hamiltonian. Margolus \& Levitin (ML) \cite{Margolus:98} gave a different bound involving the inverse of the mean of the Hamiltonian, and the two bounds were subsequently unified \cite{QSL_Toffoli}.  These results led to numerous  applications  and extensions which go beyond the traditional QSLs, and consider, e.g., the minimum time for optimal control, or for implementing a unitary gate in quantum computation    
\cite{QSL_Fleming,QSL_Bha83,QSL_Vaidman,Pfeifer:1993fk,QSL_Pfeifer_RMP,QSL_Lloyd, QSL_Lloyd2, QSL_Zych, QSL_Chau,Deffner_uncertainty, ashhab2012speed, murphy2010communication,caneva2009optimal,schulte2005optimal,muller2011optimizing,zeier2008time,khaneja2007shortest}.

In this Letter we are concerned with QSLs for open quantum systems \cite{Breuer:book}, a question that has attracted significant recent attention \cite{QSL_Davidov_13,QSL_Plenio_13,QSL_Deffner_13,QSL_14_SciRep}. While earlier work focused on generalizing the MT or ML-bounds to the open system setting, here we present state-independent, non-perturbative Hamiltonian QSL bounds for population leakage and fidelity loss, for a gapped open system interacting with a reservoir. The assumptions behind the results we present here are also different and independent from those behind previous such bounds. First, we make the (often natural) assumption that the system's initial state is restricted to an energy sector which is separated from the rest of the spectrum by a nonzero gap $\Delta E$, e.g., the ground subspace in various quantum information processing applications.
Second, our bounds are independent of the state of the system or reservoir, and in particular, remain valid in the presence of initial correlations between the system and the reservoir. Third, our bounds are obtained purely at the Hamiltonian level.  
Thus, unlike most other open system QSL bounds \cite{QSL_Davidov_13,QSL_Plenio_13,QSL_Deffner_13,QSL_14_SciRep}, we do not assume that the system's evolution is governed by a master equation or a completely positive channel.   

Our key result is given in Eqs.~\eqref{eq:tau_leak} and \eqref{eq:tau_fid} below, and comprises fundamental QSL bounds on decoherence and leakage times, expressed in terms of $\Delta E$ and the bounded norms of the interaction Hamiltonian and its commutator with the reservoir Hamiltonian, without assuming that the reservoir Hamiltonian is norm-bounded.   Note that  applying the traditional closed system QSL bounds to the system and reservoir together in general yields bounds which are rather loose and independent of the gap $\Delta E$ \cite{comment-multi}.

Given the very general assumptions behind our QSLs, they have a wide range of applicability similar to the previously known QSLs, including relaxation in many-body spin systems and limitations of control via a remote controller.  
The primary application on which we focus is quantum error suppression, where the goal is to slow down the loss of fidelity relative to some desired system state, e.g., in the context of quantum information processing tasks \cite{nielsen2000quantum,Lidar-Brun:book}.
A common strategy to achieve fidelity enhancements is to use or generate energy gaps (e.g., \cite{Kitaev:97,Dennis:02,Bacon:01,PhysRevA.74.052322,PAL:13,Bookatz:2014uq,Terhal:2015dq}).  
Therefore, after deriving our QSLs we study the dependence of the speed of decoherence and leakage on $\Delta E$. This enables us to find a general lower bound on the timescale for leakage. As expected, we find that in the $\Delta E \to \infty$ limit the probability of leakage at any finite time goes to zero, and moreover, that if the {error detection condition} \cite{Knill:1997kx} holds then in this limit the state retains its fidelity and remains unaffected by the environment. However, we demonstrate that such protection is not guaranteed when $\Delta E$ is finite. Namely, by analyzing a spin system model, we show that increasing the gap can in fact accelerate fidelity loss and decoherence, essentially because of a resonance between the system and the reservoir. This means that protection via increasing energy gaps can be counterproductive \cite{comment-anti-Zeno}.

\textit{Technical results}.---%
Consider a system S coupled to a reservoir R with the total Hamiltonian $H_\tot(t)={H}_\sys+ {H}_\B(t)+{H}_\inter $ where $[{H}_\sys, {H}_\B(t)]=0$ and the interaction satisfies $\|H_\inter \| < \infty$ (we use the operator norm $\|\cdot\|$, i.e., the largest singular value; we also use $\hbar=1$ units throughout).   An important class of examples are spin-bath models \cite{Prokofev:00}.  We denote the time-dependent joint system-reservoir state evolving under $H_\tot(t)$ by $\rho_\SB(t)$ and the reduced state of the system at time $t$ by $\rho(t) = \Tr_\B[\rho_\SB(t)]$. 
Let $\mc{C}$ be the subspace of the system Hilbert space spanned by  the eigenstates of $H_\sys$ whose energies lie in the interval $\mc{I}\subseteq \mathbb{R}$, which includes at least one eigenvalue of  $H_\sys$. 
Let $P_\mc{C}$ be the projector onto $\mc{C}$, and $Q_\mc{C}\equiv I-P_\mc{C}$ be the projector onto the orthogonal subspace $\mc{C}^\perp$. Thus $[ P_\mc{C}, H_\sys ]=0$. 
Let $\delta E$ denote the energy spread in $\mc{C}$, i.e., the difference between the minimum and maximum eigenvalues of $H_\sys$ in $\mc{I}$.
Let $\Delta E$ denote the gap between the energy levels of $H_\sys$ inside and outside $\mc{C}$ (i.e., if $\lambda_1$ and $\lambda_2$ are two distinct eigenvalues of $H_\sys$ such that $\lambda_1\in \mc{I}$ but $\lambda_2\notin \mc{I}$, then $|\lambda_1-\lambda_2|\ge\Delta E$). 
Initially we assume $\Delta E > 2\| H_\inter \|$, which guarantees that there is a separation between the system energies inside $\mc{I}$ and the rest of the spectrum, even in the presence of the interaction. 
This simplification is relevant because we are mostly interested in the large $\Delta E$ limit.  
Later, when we arrive at Eq.~\eqref{change-RF}, we present the general form of the result which relaxes this condition, and results in tighter bounds for small $t$, even when $\Delta E < 2\| H_\inter \|$.   Before we introduce our bounds, we define an important inverse timescale for open system dynamics, that will make repeated appearances:
\beq
\Omega(t) \equiv \frac{2 \|[H_\inter ,H_\B(t)]\|}{\Delta E-2 \|H_\inter \|}\ .
\eeq
We proceed to present and interpret our main results. All our results are given rigorous proofs in the Supplementary Material (SM) \cite{SM_CO}.
Unless stated otherwise, throughout we assume that the system state is initialized in $\mc{C}$, i.e., $\rho(0)=P_\mc{C}\rho(0)P_\mc{C}$. 

\textit{Leakage}.---%
Leakage is the process whereby the system state develops support in $\mc{C}^\perp$, which we quantify in terms of the leakage probability $p_\leak(t)\equiv\Tr\!\left[ \rho(t) Q_\mc{C}\right]$. 
Our first general result is an upper bound on $p_\leak(t)$, proved in the SM \cite{SM_CO}: 
\beq
p_\leak(t) \le    \left(\frac{{4}  \|H_\inter \|}{ \Delta E} + \int_0^t ds\  \Omega(s) \right)^{2} \stackrel{\!\!\!\!\!\!\Delta E \shortrightarrow \infty} {\xrightarrow{\hspace*{1cm}} 0} \ .
\label{bound_prob1}
\eeq
To explain this bound, note that the terms ${\|H_\inter \|}/{ \Delta E}$ and $ \int_0^t ds\  \Omega(s)$ correspond to two different sources of leakage: ${\|H_\inter \|}/{ \Delta E}$ determines how much $\mc{C}$ is rotated by the interaction $H_\inter $. The rotated eigenstates of the perturbed Hamiltonian can cause leakage relative to the eigenstates of the original Hamiltonian. Of course, this also happens in the closed systems, and this is why this term does not vanish for $H_\B(t)=0$, where the total system Hamiltonian becomes $H_\sys+H_\inter $. Since ${\|H_\inter \|}/{ \Delta E}$ is time-independent, it remains small and insignificant in the limit where the gap is large. 
The term $\int_0^t ds\  \Omega(s)$ is more interesting. In particular, in the case of   time-independent  $H_\B(t)=H_\B$, where the total energy of the system and reservoir is a conserved quantity, $\|[H_\B,H_\inter]\|$ can be interpreted as the maximum rate of change of energy of reservoir.
Then, in the special case where $\mc{C}$ is the bottom (top) energy sector, $\Omega^{-1}$ can be interpreted as the minimum time the reservoir needs to transfer (absorb) the required energy to move the system from $\mc{C}$ to  $\mc{C}^\perp$ (see the SM \cite{SM_CO}).

\textit{Fidelity}.---%
We compare the instantaneous ``actual" state $\rho(t)$ and the ``ideal" system state  $\rho_\id(t) = e^{-i t H_\sys} \rho(0) e^{i t H_\sys}$ 
using their Uhlmann {fidelity} \cite{Uhlmann, Fidelity_Jozsa} $F\!\left[\rho(t),\rho_\id(t)\right]\equiv \|\sqrt{\rho(t)}\sqrt{ \rho_\id(t)}\|_1$ ($\|\cdot\|_1$ is the trace norm) and their {Bures angle}  $\Theta(t)\equiv\arccos{\left(F[\rho(t),\rho_\id(t)]\right)}$, a generalization to mixed states of the angle between two pure states \cite{bures}. 
Let $P_0 \equiv P_\mc{C}\otimes I_\B$. We define the \emph{induced splitting} by $H_\inter $ on $\mc{C}$ as
\beq
\IS(P_0 H_\inter P_0)\equiv \min_{K_\B\in  \textrm{Herm}(\mc{H}_\B)} \|P_0 H_\inter P_0-P_\mc{C}\otimes K_\B \| \ ,
\label{IS:Def}
\eeq
where the minimization is over the Hermitian operators acting on the reservoir Hilbert space $\mc{H}_\B$. This quantity can be interpreted as the strength of the effective interaction between the code subspace and the reservoir in the lowest order of perturbation theory. 
It exists because the reservoir can couple to different states in the subspace $\mathcal{C}$ in different ways and this generally leads to decoherence, or, in special cases, to a modification of the system Hamiltonian, a potentially beneficial effect \cite{Zanardi:2014fr} (see the SM \cite{SM_CO}).
This term can be nonzero only when $\mc{C}$ is at least two-dimensional. 
We can now state our second general result, an infidelity upper-bound:
\begin{align}
\label{main_bound_open}
&\sin \frac{\Theta(t)}{2}= \frac{1}{\sqrt{2}}\sqrt{1-F\!\left[\rho(t),\rho_\id(t)\right]}\le \frac{{2} \|H_\inter \|}{ \Delta E}\\
& +  \int_0^t  ds\ \Omega(s)+ t\left[\frac{\IS(P_0 H_\inter P_0)}{2}  + \frac{{2} \|H_\inter \| (\delta E+\|H_\inter \|)}{ \Delta E} \right] \notag \ .
\end{align}
Related bounds have been obtained in Ref.~\cite{Bookatz:2014uq}. While bound~\eqref{main_bound_open} holds for $\Delta E> 2\|H_\inter\|$ and states initialized in $\mc{C}$, 
our third general result is a simple universal QSL bound which does not require either one of these assumptions:
\begin{align}
\label{general_bound}
\sin \frac{\Theta(t)}{2}\le \frac{t (\lambda_{\max}-\lambda_{\min})}{4} \le \frac{t \|H_\inter \|}{2}\ ,
\end{align}
where $\lambda_{\max}$ and $\lambda_{\min}$ are the maximum and minimum eigenvalues of  $H_\inter $, respectively.
This bound formalizes the standard intuition that the minimum relaxation time of an interacting system is determined by the inverse of the couplings. However, as we will show in an explicit example, our QSL bounds~\eqref{bound_prob1} and \eqref{main_bound_open} can lead to much stronger bounds on the relaxation time.

\textit{Quantum speed limits}.---%
The bounds we have presented directly lead to QSLs on open-system quantum evolution, as we show next. 
For simplicity, in the following we assume that $H_\B(t)=H_\B$. 
 
Let $\tau^\mc{C}_\leak$ denote the smallest time at which the probability of leakage from $\mathcal{C}$ exceeds a constant threshold $p_0\in (0,1)$. 
Then, it follows from bound~\eqref{bound_prob1} 
that in the large-gap limit (i.e., $\|H_\inter\|/\Delta E\ll p_0^{1/2}$) this timescale is lower-bounded by $\frac{\Delta E}{{2}\|[H_\inter ,H_\B]\|}  p_0^{1/2}$.
We can find a different lower bound on $\tau^\mc{C}_\leak$ using bound~\eqref{general_bound} together with the fact that $F\!\left[\rho(t),\rho_\id(t)\right]\le \sqrt{1-p_\leak(t)}$ (see the SM \cite{SM_CO}). Let $\tau_{\min}$ be the smallest time at which $F\!\left[\rho(t),\rho_\id(t)\right]$ drops below the threshold $(1-p_0)^{1/2}$ for an arbitrary initial state. This threshold convention guarantees $\tau^\mc{C}_\leak\ge \tau_{\min}$. Then   
bound~\eqref{general_bound} implies $\tau_{\min}\ge c(p_0)\|H_\inter\|^{-1}$, where $c(p_0) = [2\{1-(1-p_0)^{1/2}\}]^{1/2}$, and hence
\begin{align}
\label{eq:tau_leak}
\tau^\mc{C}_\leak &\ge \max\left\{c(p_0)\|H_\inter\|^{-1}\ ,\ p_0^{1/2} \frac{\Delta E}{{2}\|[H_\inter ,H_\B]\|}  \right\} \ .
\end{align}

Similarly, we can define $\tau^\mc{C}_\text{fid}$ to be the smallest time at which $F\!\left[\rho(t),\rho_\id(t)\right]$ drops below the threshold $(1-p_0)^{1/2}$.
For this choice of threshold we always have $\tau^\mc{C}_\text{fid}\le \tau^\mc{C}_\leak$. 
If we further assume the same large-gap limit and also that $\IS(P_0 H_\inter P_0)=0$ and $\delta E=0$, which is a relevant assumption in the context of error suppression, we find using bound~\eqref{main_bound_open} that
\beq
\tau^\mc{C}_\text{fid} \ge {c(p_0)}\ \max\left\{ \|H_\inter \|^{-1}\ ,\ \frac{\Delta E}{4(\|[H_\inter ,H_\B]\|+\|H_\inter \|^2)} \right\}   \ .
\label{eq:tau_fid}
\eeq  
Equations~\eqref{eq:tau_leak} and \eqref{eq:tau_fid} constitute our key new QSL bounds.
It follows from the definitions of the various timescales we have introduced, together with our result in the bound~\eqref{general_bound}, that
\beq
\label{all_bounds}
\tau^\mc{C}_\leak \ge \tau^\mc{C}_\text{fid} {\ge} \tau_{\min}\ge  c(p_0) \|H_\inter \|^{-1}\ .    
 \eeq
The above bounds on $\tau^\mc{C}_\leak$, $\tau^\mc{C}_\text{fid}$ and $ \tau_{\min} $ are all first-order in $\|H_\inter \|^{-1}$.  
On the other hand, any master equation derived under the Born-Markov approximation (BMA) is necessarily second-order in the coupling strength \cite{Breuer:book}.
Therefore, these QSL time scales, or more generally any open-system behavior which occurs on a timescale of order $\|H_\inter \|^{-1}$, such as the resonance phenomenon discussed below, cannot be described under the BMA.

\textit{Quantum error suppression}.---%
One of the main applications of these bounds is in the context of quantum error suppression. $\mc{C}$ is then the \emph{code subspace} and one is usually interested in the case where it is a degenerate eigensubspace of $H_\sys$ (i.e., $\delta E = 0$). In this case $\rho_\id(t)=\rho(0)$, whence $F\!\left[\rho(t),\rho_\id(t)\right]$ is simply the fidelity between the initial state and the state at time $t$. 
The fidelity can degrade even if the gap is large compared to the interaction, i.e., if $\|H_\inter\|/ \Delta E\le \epsilon\ll1$. In this limit bound~\eqref{main_bound_open} implies that the rate of fidelity loss  is upper bounded by $\Omega'(t)=2\sqrt{2}\Omega(t)+\sqrt{2}\,\IS(P_0 H_\inter P_0)+\mathcal{O}(\epsilon)\|H_\inter\|$. This result has a simple interpretation: fidelity loss can happen either because of leakage, whose speed is bounded by $\Omega(t)$, or because of the effect of the reservoir on $\mathcal{C}$, whose strength is given by $\IS(P_0 H_\inter P_0)$.  In the limit $\Delta E \rightarrow \infty$  the rate of fidelity loss is determined just by the induced splitting $\IS(P_0 H_\inter P_0)$, and if this quantity vanishes then for any finite time $t$, $F\!\left[\rho(t),\rho_\id(t)\right]\rightarrow 1$. 

Therefore, the special case where $\IS(P_0 H_\inter P_0)=0$ is particularly important for error suppression. To illuminate it, consider the decomposition of the interaction term as $H_\inter =\sum_\alpha S^\alpha\otimes B^\alpha$, where $\{S^\alpha\}$ and $\{B^\alpha\}$ are, respectively, independent system and reservoir operators. Then, using Eq.~\eqref{IS:Def}, we find that  
$\IS(P_0 H_\inter P_0)=0$ iff $P_\mc{C} S^\alpha P_\mc{C} \propto P_\mc{C}$ for all $S^\alpha$. This is also known as the quantum error detection (QED) condition \cite{Knill:1997kx,Lidar-Brun:book}. Thus the induced splitting quantifies the deviation from the QED conditions. $\IS(P_0 H_\inter P_0)=0$ can be the result of symmetries of the interaction as in a decoherence-free subspace \cite{Zanardi:97c,Lidar:1998fk}, or it can be engineered using QED codes (e.g., \cite{PhysRevA.74.052322}).  Using bound \eqref{main_bound_open} we can go beyond this special case and study the effectiveness of a particular error suppressing scheme in the case where the perfect QED condition does not hold (see also Ref.~\cite{Marvian:2014nr}).

\textit{Role of the reservoir and system parameters.}---%
One of the interesting aspects of bounds~\eqref{bound_prob1}-\eqref{main_bound_open} is that they are independent of the reservoir state, and the reservoir Hamiltonian enters only via $\|[H_\inter ,H_\B(t)]\|$. This means   that even if the reservoir is infinitely large and $\|H_\B(t)\|$ or $\|d H_\B(t)/dt\|$ are unbounded, as long as  ${\|[H_\inter ,H_\B]\|}$ remains small and bounded, the leakage can be a slow process, depending on the ratio $\Delta E/\|[H_\inter ,H_\B]\|$. This happens, in particular, when the interaction with the reservoir is quasilocal, i.e., the system degrees of freedom (DOFs) interact only weakly with the distant reservoir DOFs. To be concrete, consider the decompositions $H_\inter = \sum_{i\in
\textrm{R}} H_{\inter}^{(i)}$, where each term $H_{\inter}^{(i)}$
acts non-trivially only on a local DOF $i$ in the reservoir. Then
$\|[H_\inter ,H_\B(t)]\|  \le 2 \sum_{i\in \textrm{R}} \|
H_{\inter}^{(i)}\|\, \|H^{(i)}_\B(t)\|$ where $H^{(i)}_\B(t)$ is the
sum of all the terms in $H_\B(t)$ which act non-trivially on DOF $i$.
In many physical scenarios this sum, and hence $\|[H_\inter ,H_\B(t)]\| $, is bounded and small while $\|H_\B(t)\|$ is unbounded and contains long-range interactions. E.g., the reservoir may contain bosonic DOFs, for which $\|H_\B(t)\|$ is infinite. But, as long as these bosonic DOFs do not directly interact with either the system or the DOFs which directly couple to the system (i.e., those with $H_{\inter}^{(i)}\neq 0$), $\|[H_\inter ,H_\B(t)]\|$ can be small. This remains true even if information propagates arbitrarily fast through the reservoir and the Lieb-Robinson velocity \cite{LiebRobinson} is unbounded. 

On the other hand, if $\|[H_\inter ,H_\B(t)]\|$ is large and comparable to $\Delta E \|H_\inter\|$, then our QSL bounds suggest that the timescales for fidelity loss and leakage error can be as small as $\|H_\inter\|^{-1}$, even in the large $\Delta E$ limit. As we explicitly show below, the bounds are attainable when $\|[H_\inter ,H_\B(t)]\|\simeq\Delta E \|H_\inter\|$. 

It is also interesting to note that  bounds~\eqref{bound_prob1}-\eqref{main_bound_open} are independent of the state of the reservoir. This implies that even in the limit of infinitely high temperature $T$, leakage can still be a very slow process, depending on the ratio $\Delta E/{\|[H_\inter ,H_\B]\|}$. This is  a consequence of the assumption that both $\|H_\inter\|$ and $\|[H_\inter ,H_\B]\|$ are bounded, and it does not hold, e.g., in the case of bosonic reservoirs with the standard spin-boson coupling to the system.  On the other hand, even at $T=0$ all information in the system state can be erased by the reservoir in a time of order $\|H_\inter \|^{-1}$, the shortest timescale over which the reservoir can have any influence on the system. Thus, the time it takes the reservoir to affect the evolution of the system is not necessarily related to $T$.

\textit{A model of resonance.}---%
To study the dependence of the time scales for leakage and fidelity loss on the system and reservoir parameters, we present an illustrative example. This is a simple model that exhibits the phenomenon of resonance between the system and reservoir, and is relevant, e.g., also in the context of state transfer via spin chains \cite{Bose:2003uq}. The system is a single qubit ($k=1$) with Hamiltonian $\Delta E_1 {\sigma}_1^z/2$, and gap $\Delta E_1$. The reservoir can have an arbitrarily large number of DOFs and may contain bosonic modes. The only assumptions we make about the structure of the reservoir are (i) the only reservoir DOF which directly couples to the system is another qubit ($k=2$), and (ii) the interaction between the reservoir qubit and other reservoir DOFs, denoted by $h_{2,\textrm{rest}}(t)$, is bounded.  The total Hamiltonian is
\begin{equation}
H_\tot(t)=\sum_{k=1}^{2} \frac{\Delta E_k}{2} {\sigma}^z_k+ J {\vec{\sigma}_1} \cdot \vec{\sigma}_{2}+h_{2,\textrm{rest}}(t)+H_\textrm{rest}(t)\ ,
\end{equation}
where $H_\textrm{rest}(t)$ is an arbitrary Hamiltonian that acts trivially on qubits $1$ and $2$. 

The system qubit is initially in a ${\sigma}_1^z$ eigenstate, and the reservoir is in an arbitrary initial state. 
It turns out that the system's behavior differs strongly between the \emph{resonance} ($|\Delta E_1-\Delta E_2|\ll |J|$) and \emph{out-of-resonance} ($|\Delta E_1-\Delta E_2|\gg |J|$) regimes. To demonstrate this it is useful to transform to the rotating frame defined by $\ket{\phi}\mapsto \exp[i \Delta E_2 t({\sigma}^z_1+{\sigma}^z_2)]\ket{\phi}$. Both the leakage probability of the system qubit and the Heisenberg Hamiltonian are invariant under this unitary transformation. Thus, the new total Hamiltonian in the rotating frame can be obtained from $H_\tot(t)$ by replacing $\Delta E_k\mapsto \Delta E_k-\Delta E_2$, $k=1,2$ and $h_\text{2,rest}(t)\mapsto  \exp[i \Delta E_2 {\sigma}^z_2] h_\text{2,rest}(t) \exp[-i \Delta E_2 {\sigma}^z_2] $. Therefore, the system's energy gap in this rotating frame is $\Delta E_1-\Delta E_2$. 

In the resonance regime leakage can occur in a time of $\mathcal{O}(|J|^{-1})$, the fastest time allowed by the fundamental QSL bound \eqref{general_bound}.  This happens, e.g., already in the case of single qubit reservoir, i.e., $h_{2,\text{rest}}(t)=H_\text{rest}(t)=0$, for which $H_\tot$ can easily be diagonalized. Under the resonance condition the states of the system and reservoir qubits are then swapped in a time of  $\mathcal{O}(|J|^{-1})$, so the fidelity with the initial state is lost.
On the other hand, using our QSL bound~\eqref{eq:tau_leak}, we find that to have leakage with probability of $\mathcal{O}(1)$ in the out-of-resonance regime, the minimum required time is lower bounded as 
\beq
\tau_\leak\ge c {|J|^{-1}}\, \max\{1, \frac{|\Delta E_1-\Delta E_2|}{ \max_t \|h_{2,\text{rest}}(t) \|}\}\ ,
\label{eq:tau_leak_example}
\eeq
representing a potentially drastic increase in the time required for leakage relative to the minimum time $ c|J|^{-1}$ (where $c$ is a constant of order one) obtained from more standard QSL bounds in the form of Eq.~\eqref{general_bound}.   

This model has several interesting general implications: (i) Increasing the system gap can \emph{increase} fidelity loss because the system may become resonant with reservoir DOFs; 
(ii) The relevant parameter which determines the speed of leakage and fidelity loss is not the system gap but the energy mismatch between the system DOFs and the local reservoir DOFs, i.e., those that couple \emph{directly} to the system. If they are in resonance with the system gap, then the reservoir can be insensitive to the gap, and leakage can happen in a time of order $\tau_{\min}\sim \|H_\inter\|^{-1}$, i.e., as fast as allowed by the fundamental QSL bound~\eqref{general_bound}. On the other hand, if this energy mismatch is large then relaxation is slow, even if the remote DOFs of the reservoir are in resonance with the system. (iii) Our QSL bounds are attainable in the regime where $\|[H_\inter,H_\B]\| \sim \Delta E\|H_\inter\|$. (iv) Applying these bounds in different rotating frames can lead to different independent constraints.

\textit{Beyond the $\Delta E> 2\|H_\inter \|$ assumption}.---%
Finally, we discuss how the large-gap assumption $\Delta E> 2\|H_\inter \|$, used in deriving our previous bounds, can be relaxed.  The key idea is to transform to a rotating frame in which $\Delta E$ becomes larger. Let $Q^{+}_\mc{C}$ ($Q^{-}_\mc{C}$) be the projector onto the subspace spanned by the eigenstates of $H_\sys$ whose eigenvalues are greater (less) than those in $\mc{I}$, and transform to the frame defined by $\ket{\Phi} \mapsto e^{-i t F (Q^{+}_\mc{C}- Q^{-}_\mc{C})}\ket{\Phi}$, where $F\in\mathbb{R}$. As we prove in the SM \cite{SM_CO}, 
the new gap between $\mc{C}$ and $\mc{C}^\perp$ becomes $\Delta E + F$. Consequently, bounds \eqref{bound_prob1}-\eqref{eq:tau_fid} all remain valid for any $ F>2\|H_\inter \|-\Delta E$, after the substitutions
\beq
\label{change-RF}
\Delta E \mapsto \Delta E+F\ ,\ \ \ \ \ \   H_\B(t)\mapsto H_\B(t) - F (Q^{+}_\mc{C}- Q^{-}_\mc{C})\ .
\eeq
Moreover, using this generalization, we find that in the large $F$ limit, bound \eqref{main_bound_open} implies that $\sin \frac{\Theta(t)}{2}\le {9 \| H_\inter  \| t}$, 
which is the same as bound~\eqref{general_bound}, up to a constant. Thus, by varying $F$ from $0$ to $\infty$ we can find a family of bounds which gradually changes from \eqref{main_bound_open} to \eqref{general_bound}, and find the strongest bound for fixed given values of the parameters.

\textit{Conclusions.}---%
In this work we introduced state-independent QSLs on leakage and fidelity loss in a Hamiltonian open system framework. 
The reservoir Hamiltonian $H_\B(t)$ only enters our bounds via $\| [H_\inter,H_\B(t)] \|$, implying that only local reservoir modes play a role in our QSLs.
Another important conclusion concerns the common claim that increasing the system's energy gap $\Delta E$ always results in better protection from coupling to the reservoir. The intuitive basis for this claim is the idea that a large gap suppresses thermal excitations by the Boltzmann factor $e^{-\Delta E/kT}$. Under the BMA, the claim can be justified provided the spectral density of the reservoir is monotonically decreasing~\cite{Albash:2015nx}. However, this condition is often violated, e.g., as in the case of an Ohmic bath.
Our results, which are derived without approximations, demonstrate that this intuition is not always correct. 
Increasing $\Delta E$ can result in a resonance with the reservoir, causing the fidelity to drop on a timescale independent of $\Delta E$, even if $T=0$ and the reservoir is in a pure state. These results demonstrate the utility of state-independent QSL bounds for open system dynamics, and raise new questions about the efficacy of energy gaps in protecting quantum information.

\begin{acknowledgments}
This work was supported under grants ARO W911NF-12-1-0541 and NSF CCF-1254119. We thank P. Zanardi, S. Lloyd and E. Farhi for useful discussions.  
\end{acknowledgments}

\bibliography{refs}

\begin{thebibliography}{54}%
\makeatletter
\providecommand \@ifxundefined [1]{%
 \@ifx{#1\undefined}
}%
\providecommand \@ifnum [1]{%
 \ifnum #1\expandafter \@firstoftwo
 \else \expandafter \@secondoftwo
 \fi
}%
\providecommand \@ifx [1]{%
 \ifx #1\expandafter \@firstoftwo
 \else \expandafter \@secondoftwo
 \fi
}%
\providecommand \natexlab [1]{#1}%
\providecommand \enquote  [1]{``#1''}%
\providecommand \bibnamefont  [1]{#1}%
\providecommand \bibfnamefont [1]{#1}%
\providecommand \citenamefont [1]{#1}%
\providecommand \href@noop [0]{\@secondoftwo}%
\providecommand \href [0]{\begingroup \@sanitize@url \@href}%
\providecommand \@href[1]{\@@startlink{#1}\@@href}%
\providecommand \@@href[1]{\endgroup#1\@@endlink}%
\providecommand \@sanitize@url [0]{\catcode `\\12\catcode `\$12\catcode
  `\&12\catcode `\#12\catcode `\^12\catcode `\_12\catcode `\%12\relax}%
\providecommand \@@startlink[1]{}%
\providecommand \@@endlink[0]{}%
\providecommand \url  [0]{\begingroup\@sanitize@url \@url }%
\providecommand \@url [1]{\endgroup\@href {#1}{\urlprefix }}%
\providecommand \urlprefix  [0]{URL }%
\providecommand \Eprint [0]{\href }%
\providecommand \doibase [0]{http://dx.doi.org/}%
\providecommand \selectlanguage [0]{\@gobble}%
\providecommand \bibinfo  [0]{\@secondoftwo}%
\providecommand \bibfield  [0]{\@secondoftwo}%
\providecommand \translation [1]{[#1]}%
\providecommand \BibitemOpen [0]{}%
\providecommand \bibitemStop [0]{}%
\providecommand \bibitemNoStop [0]{.\EOS\space}%
\providecommand \EOS [0]{\spacefactor3000\relax}%
\providecommand \BibitemShut  [1]{\csname bibitem#1\endcsname}%
\let\auto@bib@innerbib\@empty
\bibitem [{\citenamefont {Mandelstam}\ and\ \citenamefont
  {Tamm}(1945)}]{QSL_MT}%
  \BibitemOpen
  \bibfield  {author} {\bibinfo {author} {\bibfnamefont {L.}~\bibnamefont
  {Mandelstam}}\ and\ \bibinfo {author} {\bibfnamefont {I.}~\bibnamefont
  {Tamm}},\ }\href
  {http://link.springer.com/chapter/10.1007%2F978-3-642-74626-0_8#} {\bibfield
  {journal} {\bibinfo  {journal} {J. Phys.(USSR)}\ }\textbf {\bibinfo {volume}
  {9}},\ \bibinfo {pages} {1} (\bibinfo {year} {1945})}\BibitemShut {NoStop}%
\bibitem [{\citenamefont {Margolus}\ and\ \citenamefont
  {Levitin}(1998)}]{Margolus:98}%
  \BibitemOpen
  \bibfield  {author} {\bibinfo {author} {\bibfnamefont {N.}~\bibnamefont
  {Margolus}}\ and\ \bibinfo {author} {\bibfnamefont {L.~B.}\ \bibnamefont
  {Levitin}},\ }\href {\doibase dx.doi.org/10.1016/S0167-2789(98)00054-2}
  {\bibfield  {journal} {\bibinfo  {journal} {Physica D: Nonlinear Phenomena}\
  }\textbf {\bibinfo {volume} {120}},\ \bibinfo {pages} {188} (\bibinfo {year}
  {1998})}\BibitemShut {NoStop}%
\bibitem [{\citenamefont {Levitin}\ and\ \citenamefont
  {Toffoli}(2009)}]{QSL_Toffoli}%
  \BibitemOpen
  \bibfield  {author} {\bibinfo {author} {\bibfnamefont {L.~B.}\ \bibnamefont
  {Levitin}}\ and\ \bibinfo {author} {\bibfnamefont {T.}~\bibnamefont
  {Toffoli}},\ }\href {http://link.aps.org/doi/10.1103/PhysRevLett.103.160502}
  {\bibfield  {journal} {\bibinfo  {journal} {Phys. Rev. Lett.}\ }\textbf
  {\bibinfo {volume} {103}},\ \bibinfo {pages} {160502} (\bibinfo {year}
  {2009})}\BibitemShut {NoStop}%
\bibitem [{\citenamefont {Fleming}(1973)}]{QSL_Fleming}%
  \BibitemOpen
  \bibfield  {author} {\bibinfo {author} {\bibfnamefont {G.~N.}\ \bibnamefont
  {Fleming}},\ }\bibfield  {booktitle} {\emph {\bibinfo {booktitle} {Il Nuovo
  Cimento A}},\ }\href {\doibase 10.1007/BF02819419} {\bibfield  {journal}
  {\bibinfo  {journal} {Nuov Cim A}\ }\textbf {\bibinfo {volume} {16}},\
  \bibinfo {pages} {232} (\bibinfo {year} {1973})}\BibitemShut {NoStop}%
\bibitem [{\citenamefont {Bhattacharyya}(1983)}]{QSL_Bha83}%
  \BibitemOpen
  \bibfield  {author} {\bibinfo {author} {\bibfnamefont {K.}~\bibnamefont
  {Bhattacharyya}},\ }\href {http://stacks.iop.org/0305-4470/16/i=13/a=021}
  {\bibfield  {journal} {\bibinfo  {journal} {Journal of Physics A:
  Mathematical and General}\ }\textbf {\bibinfo {volume} {16}},\ \bibinfo
  {pages} {2993} (\bibinfo {year} {1983})}\BibitemShut {NoStop}%
\bibitem [{\citenamefont {Vaidman}(1992)}]{QSL_Vaidman}%
  \BibitemOpen
  \bibfield  {author} {\bibinfo {author} {\bibfnamefont {L.}~\bibnamefont
  {Vaidman}},\ }\href
  {http://scitation.aip.org/content/aapt/journal/ajp/60/2/10.1119/1.16940}
  {\bibfield  {journal} {\bibinfo  {journal} {American Journal of Physics}\
  }\textbf {\bibinfo {volume} {60}},\ \bibinfo {pages} {182} (\bibinfo {year}
  {1992})}\BibitemShut {NoStop}%
\bibitem [{\citenamefont {Pfeifer}(1993)}]{Pfeifer:1993fk}%
  \BibitemOpen
  \bibfield  {author} {\bibinfo {author} {\bibfnamefont {P.}~\bibnamefont
  {Pfeifer}},\ }\href {http://link.aps.org/doi/10.1103/PhysRevLett.70.3365}
  {\bibfield  {journal} {\bibinfo  {journal} {Physical Review Letters}\
  }\textbf {\bibinfo {volume} {70}},\ \bibinfo {pages} {3365} (\bibinfo {year}
  {1993})}\BibitemShut {NoStop}%
\bibitem [{\citenamefont {Pfeifer}\ and\ \citenamefont
  {Fr{\"o}hlich}(1995)}]{QSL_Pfeifer_RMP}%
  \BibitemOpen
  \bibfield  {author} {\bibinfo {author} {\bibfnamefont {P.}~\bibnamefont
  {Pfeifer}}\ and\ \bibinfo {author} {\bibfnamefont {J.}~\bibnamefont
  {Fr{\"o}hlich}},\ }\href {http://link.aps.org/doi/10.1103/RevModPhys.67.759}
  {\bibfield  {journal} {\bibinfo  {journal} {Reviews of Modern Physics}\
  }\textbf {\bibinfo {volume} {67}},\ \bibinfo {pages} {759} (\bibinfo {year}
  {1995})}\BibitemShut {NoStop}%
\bibitem [{\citenamefont {Giovannetti}\ \emph {et~al.}(2003)\citenamefont
  {Giovannetti}, \citenamefont {Lloyd},\ and\ \citenamefont
  {Maccone}}]{QSL_Lloyd}%
  \BibitemOpen
  \bibfield  {author} {\bibinfo {author} {\bibfnamefont {V.}~\bibnamefont
  {Giovannetti}}, \bibinfo {author} {\bibfnamefont {S.}~\bibnamefont {Lloyd}},
  \ and\ \bibinfo {author} {\bibfnamefont {L.}~\bibnamefont {Maccone}},\ }\href
  {\doibase 10.1103/PhysRevA.67.052109} {\bibfield  {journal} {\bibinfo
  {journal} {Phys. Rev. A}\ }\textbf {\bibinfo {volume} {67}},\ \bibinfo
  {pages} {052109} (\bibinfo {year} {2003})}\BibitemShut {NoStop}%
\bibitem [{\citenamefont {Giovannetti}\ \emph {et~al.}(2004)\citenamefont
  {Giovannetti}, \citenamefont {Lloyd},\ and\ \citenamefont
  {Maccone}}]{QSL_Lloyd2}%
  \BibitemOpen
  \bibfield  {author} {\bibinfo {author} {\bibfnamefont {V.}~\bibnamefont
  {Giovannetti}}, \bibinfo {author} {\bibfnamefont {S.}~\bibnamefont {Lloyd}},
  \ and\ \bibinfo {author} {\bibfnamefont {L.}~\bibnamefont {Maccone}},\ }\href
  {http://stacks.iop.org/1464-4266/6/i=8/a=028} {\bibfield  {journal} {\bibinfo
   {journal} {Journal of Optics B: Quantum and Semiclassical Optics}\ }\textbf
  {\bibinfo {volume} {6}},\ \bibinfo {pages} {S807} (\bibinfo {year}
  {2004})}\BibitemShut {NoStop}%
\bibitem [{\citenamefont {Zieli{\'n}ski}\ and\ \citenamefont
  {Zych}(2006)}]{QSL_Zych}%
  \BibitemOpen
  \bibfield  {author} {\bibinfo {author} {\bibfnamefont {B.}~\bibnamefont
  {Zieli{\'n}ski}}\ and\ \bibinfo {author} {\bibfnamefont {M.}~\bibnamefont
  {Zych}},\ }\href {http://link.aps.org/doi/10.1103/PhysRevA.74.034301}
  {\bibfield  {journal} {\bibinfo  {journal} {Phys. Rev. A}\ }\textbf {\bibinfo
  {volume} {74}},\ \bibinfo {pages} {034301} (\bibinfo {year}
  {2006})}\BibitemShut {NoStop}%
\bibitem [{\citenamefont {Chau}(2010)}]{QSL_Chau}%
  \BibitemOpen
  \bibfield  {author} {\bibinfo {author} {\bibfnamefont {H.~F.}\ \bibnamefont
  {Chau}},\ }\href {http://link.aps.org/doi/10.1103/PhysRevA.81.062133}
  {\bibfield  {journal} {\bibinfo  {journal} {Phys. Rev. A}\ }\textbf {\bibinfo
  {volume} {81}},\ \bibinfo {pages} {062133} (\bibinfo {year}
  {2010})}\BibitemShut {NoStop}%
\bibitem [{\citenamefont {Deffner}\ and\ \citenamefont
  {Lutz}(2013{\natexlab{a}})}]{Deffner_uncertainty}%
  \BibitemOpen
  \bibfield  {author} {\bibinfo {author} {\bibfnamefont {S.}~\bibnamefont
  {Deffner}}\ and\ \bibinfo {author} {\bibfnamefont {E.}~\bibnamefont {Lutz}},\
  }\href {http://stacks.iop.org/1751-8121/46/i=33/a=335302} {\bibfield
  {journal} {\bibinfo  {journal} {Journal of Physics A: Mathematical and
  Theoretical}\ }\textbf {\bibinfo {volume} {46}},\ \bibinfo {pages} {335302}
  (\bibinfo {year} {2013}{\natexlab{a}})}\BibitemShut {NoStop}%
\bibitem [{\citenamefont {Ashhab}\ \emph {et~al.}(2012)\citenamefont {Ashhab},
  \citenamefont {de~Groot},\ and\ \citenamefont {Nori}}]{ashhab2012speed}%
  \BibitemOpen
  \bibfield  {author} {\bibinfo {author} {\bibfnamefont {S.}~\bibnamefont
  {Ashhab}}, \bibinfo {author} {\bibfnamefont {P.~C.}\ \bibnamefont
  {de~Groot}}, \ and\ \bibinfo {author} {\bibfnamefont {F.}~\bibnamefont
  {Nori}},\ }\href {http://link.aps.org/doi/10.1103/PhysRevA.85.052327}
  {\bibfield  {journal} {\bibinfo  {journal} {Physical Review A}\ }\textbf
  {\bibinfo {volume} {85}},\ \bibinfo {pages} {052327} (\bibinfo {year}
  {2012})}\BibitemShut {NoStop}%
\bibitem [{\citenamefont {Murphy}\ \emph {et~al.}(2010)\citenamefont {Murphy},
  \citenamefont {Montangero}, \citenamefont {Giovannetti},\ and\ \citenamefont
  {Calarco}}]{murphy2010communication}%
  \BibitemOpen
  \bibfield  {author} {\bibinfo {author} {\bibfnamefont {M.}~\bibnamefont
  {Murphy}}, \bibinfo {author} {\bibfnamefont {S.}~\bibnamefont {Montangero}},
  \bibinfo {author} {\bibfnamefont {V.}~\bibnamefont {Giovannetti}}, \ and\
  \bibinfo {author} {\bibfnamefont {T.}~\bibnamefont {Calarco}},\ }\href
  {http://link.aps.org/doi/10.1103/PhysRevA.82.022318} {\bibfield  {journal}
  {\bibinfo  {journal} {Physical Review A}\ }\textbf {\bibinfo {volume} {82}},\
  \bibinfo {pages} {022318} (\bibinfo {year} {2010})}\BibitemShut {NoStop}%
\bibitem [{\citenamefont {Caneva}\ \emph {et~al.}(2009)\citenamefont {Caneva},
  \citenamefont {Murphy}, \citenamefont {Calarco}, \citenamefont {Fazio},
  \citenamefont {Montangero}, \citenamefont {Giovannetti},\ and\ \citenamefont
  {Santoro}}]{caneva2009optimal}%
  \BibitemOpen
  \bibfield  {author} {\bibinfo {author} {\bibfnamefont {T.}~\bibnamefont
  {Caneva}}, \bibinfo {author} {\bibfnamefont {M.}~\bibnamefont {Murphy}},
  \bibinfo {author} {\bibfnamefont {T.}~\bibnamefont {Calarco}}, \bibinfo
  {author} {\bibfnamefont {R.}~\bibnamefont {Fazio}}, \bibinfo {author}
  {\bibfnamefont {S.}~\bibnamefont {Montangero}}, \bibinfo {author}
  {\bibfnamefont {V.}~\bibnamefont {Giovannetti}}, \ and\ \bibinfo {author}
  {\bibfnamefont {G.~E.}\ \bibnamefont {Santoro}},\ }\href
  {http://link.aps.org/doi/10.1103/PhysRevLett.103.240501} {\bibfield
  {journal} {\bibinfo  {journal} {Physical Review Letters}\ }\textbf {\bibinfo
  {volume} {103}},\ \bibinfo {pages} {240501} (\bibinfo {year}
  {2009})}\BibitemShut {NoStop}%
\bibitem [{\citenamefont {Schulte-Herbr{\"u}ggen}\ \emph
  {et~al.}(2005)\citenamefont {Schulte-Herbr{\"u}ggen}, \citenamefont
  {Sp{\"o}rl}, \citenamefont {Khaneja},\ and\ \citenamefont
  {Glaser}}]{schulte2005optimal}%
  \BibitemOpen
  \bibfield  {author} {\bibinfo {author} {\bibfnamefont {T.}~\bibnamefont
  {Schulte-Herbr{\"u}ggen}}, \bibinfo {author} {\bibfnamefont {A.}~\bibnamefont
  {Sp{\"o}rl}}, \bibinfo {author} {\bibfnamefont {N.}~\bibnamefont {Khaneja}},
  \ and\ \bibinfo {author} {\bibfnamefont {S.~J.}\ \bibnamefont {Glaser}},\
  }\href {http://link.aps.org/doi/10.1103/PhysRevA.72.042331} {\bibfield
  {journal} {\bibinfo  {journal} {Physical Review A}\ }\textbf {\bibinfo
  {volume} {72}},\ \bibinfo {pages} {042331} (\bibinfo {year}
  {2005})}\BibitemShut {NoStop}%
\bibitem [{\citenamefont {M{\"u}ller}\ \emph {et~al.}(2011)\citenamefont
  {M{\"u}ller}, \citenamefont {Reich}, \citenamefont {Murphy}, \citenamefont
  {Yuan}, \citenamefont {Vala}, \citenamefont {Whaley}, \citenamefont
  {Calarco},\ and\ \citenamefont {Koch}}]{muller2011optimizing}%
  \BibitemOpen
  \bibfield  {author} {\bibinfo {author} {\bibfnamefont {M.}~\bibnamefont
  {M{\"u}ller}}, \bibinfo {author} {\bibfnamefont {D.}~\bibnamefont {Reich}},
  \bibinfo {author} {\bibfnamefont {M.}~\bibnamefont {Murphy}}, \bibinfo
  {author} {\bibfnamefont {H.}~\bibnamefont {Yuan}}, \bibinfo {author}
  {\bibfnamefont {J.}~\bibnamefont {Vala}}, \bibinfo {author} {\bibfnamefont
  {K.}~\bibnamefont {Whaley}}, \bibinfo {author} {\bibfnamefont
  {T.}~\bibnamefont {Calarco}}, \ and\ \bibinfo {author} {\bibfnamefont
  {C.}~\bibnamefont {Koch}},\ }\href@noop {} {\bibfield  {journal} {\bibinfo
  {journal} {Physical Review A}\ }\textbf {\bibinfo {volume} {84}},\ \bibinfo
  {pages} {042315} (\bibinfo {year} {2011})}\BibitemShut {NoStop}%
\bibitem [{\citenamefont {Zeier}\ \emph {et~al.}(2008)\citenamefont {Zeier},
  \citenamefont {Yuan},\ and\ \citenamefont {Khaneja}}]{zeier2008time}%
  \BibitemOpen
  \bibfield  {author} {\bibinfo {author} {\bibfnamefont {R.}~\bibnamefont
  {Zeier}}, \bibinfo {author} {\bibfnamefont {H.}~\bibnamefont {Yuan}}, \ and\
  \bibinfo {author} {\bibfnamefont {N.}~\bibnamefont {Khaneja}},\ }\href
  {http://link.aps.org/doi/10.1103/PhysRevA.77.032332} {\bibfield  {journal}
  {\bibinfo  {journal} {Physical Review A}\ }\textbf {\bibinfo {volume} {77}},\
  \bibinfo {pages} {032332} (\bibinfo {year} {2008})}\BibitemShut {NoStop}%
\bibitem [{\citenamefont {Khaneja}\ \emph {et~al.}(2007)\citenamefont
  {Khaneja}, \citenamefont {Heitmann}, \citenamefont {Sp{\"o}rl}, \citenamefont
  {Yuan}, \citenamefont {Schulte-Herbr{\"u}ggen},\ and\ \citenamefont
  {Glaser}}]{khaneja2007shortest}%
  \BibitemOpen
  \bibfield  {author} {\bibinfo {author} {\bibfnamefont {N.}~\bibnamefont
  {Khaneja}}, \bibinfo {author} {\bibfnamefont {B.}~\bibnamefont {Heitmann}},
  \bibinfo {author} {\bibfnamefont {A.}~\bibnamefont {Sp{\"o}rl}}, \bibinfo
  {author} {\bibfnamefont {H.}~\bibnamefont {Yuan}}, \bibinfo {author}
  {\bibfnamefont {T.}~\bibnamefont {Schulte-Herbr{\"u}ggen}}, \ and\ \bibinfo
  {author} {\bibfnamefont {S.~J.}\ \bibnamefont {Glaser}},\ }\href
  {http://link.aps.org/doi/10.1103/PhysRevA.75.012322} {\bibfield  {journal}
  {\bibinfo  {journal} {Physical Review A}\ }\textbf {\bibinfo {volume} {75}},\
  \bibinfo {pages} {012322} (\bibinfo {year} {2007})}\BibitemShut {NoStop}%
\bibitem [{\citenamefont {Breuer}\ and\ \citenamefont
  {Petruccione}(2002)}]{Breuer:book}%
  \BibitemOpen
  \bibfield  {author} {\bibinfo {author} {\bibfnamefont {H.-P.}\ \bibnamefont
  {Breuer}}\ and\ \bibinfo {author} {\bibfnamefont {F.}~\bibnamefont
  {Petruccione}},\ }\href@noop {} {\emph {\bibinfo {title} {The Theory of Open
  Quantum Systems}}}\ (\bibinfo  {publisher} {Oxford University Press},\
  \bibinfo {address} {Oxford},\ \bibinfo {year} {2002})\BibitemShut {NoStop}%
\bibitem [{\citenamefont {Taddei}\ \emph {et~al.}(2013)\citenamefont {Taddei},
  \citenamefont {Escher}, \citenamefont {Davidovich},\ and\ \citenamefont
  {de~Matos~Filho}}]{QSL_Davidov_13}%
  \BibitemOpen
  \bibfield  {author} {\bibinfo {author} {\bibfnamefont {M.}~\bibnamefont
  {Taddei}}, \bibinfo {author} {\bibfnamefont {B.}~\bibnamefont {Escher}},
  \bibinfo {author} {\bibfnamefont {L.}~\bibnamefont {Davidovich}}, \ and\
  \bibinfo {author} {\bibfnamefont {R.}~\bibnamefont {de~Matos~Filho}},\ }\href
  {\doibase 10.1103/PhysRevLett.110.050402} {\bibfield  {journal} {\bibinfo
  {journal} {Phys. Rev. Lett.}\ }\textbf {\bibinfo {volume} {110}},\ \bibinfo
  {pages} {050402} (\bibinfo {year} {2013})}\BibitemShut {NoStop}%
\bibitem [{\citenamefont {del Campo}\ \emph {et~al.}(2013)\citenamefont {del
  Campo}, \citenamefont {Egusquiza}, \citenamefont {Plenio},\ and\
  \citenamefont {Huelga}}]{QSL_Plenio_13}%
  \BibitemOpen
  \bibfield  {author} {\bibinfo {author} {\bibfnamefont {A.}~\bibnamefont {del
  Campo}}, \bibinfo {author} {\bibfnamefont {I.}~\bibnamefont {Egusquiza}},
  \bibinfo {author} {\bibfnamefont {M.}~\bibnamefont {Plenio}}, \ and\ \bibinfo
  {author} {\bibfnamefont {S.}~\bibnamefont {Huelga}},\ }\href {\doibase
  10.1103/PhysRevLett.110.050403} {\bibfield  {journal} {\bibinfo  {journal}
  {Phys. Rev. Lett.}\ }\textbf {\bibinfo {volume} {110}},\ \bibinfo {pages}
  {050403} (\bibinfo {year} {2013})}\BibitemShut {NoStop}%
\bibitem [{\citenamefont {Deffner}\ and\ \citenamefont
  {Lutz}(2013{\natexlab{b}})}]{QSL_Deffner_13}%
  \BibitemOpen
  \bibfield  {author} {\bibinfo {author} {\bibfnamefont {S.}~\bibnamefont
  {Deffner}}\ and\ \bibinfo {author} {\bibfnamefont {E.}~\bibnamefont {Lutz}},\
  }\href {\doibase 10.1103/PhysRevLett.111.010402} {\bibfield  {journal}
  {\bibinfo  {journal} {Phys. Rev. Lett.}\ }\textbf {\bibinfo {volume} {111}},\
  \bibinfo {pages} {010402} (\bibinfo {year} {2013}{\natexlab{b}})}\BibitemShut
  {NoStop}%
\bibitem [{\citenamefont {Zhang}\ \emph {et~al.}(2014)\citenamefont {Zhang},
  \citenamefont {Han}, \citenamefont {Xia}, \citenamefont {Cao},\ and\
  \citenamefont {Fan}}]{QSL_14_SciRep}%
  \BibitemOpen
  \bibfield  {author} {\bibinfo {author} {\bibfnamefont {Y.-J.}\ \bibnamefont
  {Zhang}}, \bibinfo {author} {\bibfnamefont {W.}~\bibnamefont {Han}}, \bibinfo
  {author} {\bibfnamefont {Y.-J.}\ \bibnamefont {Xia}}, \bibinfo {author}
  {\bibfnamefont {J.-P.}\ \bibnamefont {Cao}}, \ and\ \bibinfo {author}
  {\bibfnamefont {H.}~\bibnamefont {Fan}},\ }\href
  {http://dx.doi.org/10.1038/srep04890} {\bibfield  {journal} {\bibinfo
  {journal} {Sci. Rep.}\ }\textbf {\bibinfo {volume} {4}} (\bibinfo {year}
  {2014})}\BibitemShut {NoStop}%
\bibitem [{com({\natexlab{a}})}]{comment-multi}%
  \BibitemOpen
  \href@noop {} {} ({\natexlab{a}}),\ \bibinfo {note} {{A similar phenomenon is
  encountered in the study of the speed limits for the implementation of
  two-party quantum operations \cite{ashhab2012speed}.}}\BibitemShut {Stop}%
\bibitem [{\citenamefont {Nielsen}\ and\ \citenamefont
  {Chuang}(2000)}]{nielsen2000quantum}%
  \BibitemOpen
  \bibfield  {author} {\bibinfo {author} {\bibfnamefont {M.}~\bibnamefont
  {Nielsen}}\ and\ \bibinfo {author} {\bibfnamefont {I.}~\bibnamefont
  {Chuang}},\ }\href@noop {} {\emph {\bibinfo {title} {Quantum Computation and
  Quantum Information}}},\ Cambridge Series on Information and the Natural
  Sciences\ (\bibinfo  {publisher} {Cambridge University Press},\ \bibinfo
  {year} {2000})\BibitemShut {NoStop}%
\bibitem [{\citenamefont {Lidar}\ and\ \citenamefont
  {Brun}(2013)}]{Lidar-Brun:book}%
  \BibitemOpen
  \bibinfo {editor} {\bibfnamefont {D.}~\bibnamefont {Lidar}}\ and\ \bibinfo
  {editor} {\bibfnamefont {T.}~\bibnamefont {Brun}},\ eds.,\ \href
  {http://www.cambridge.org/9780521897877} {\emph {\bibinfo {title} {Quantum
  Error Correction}}}\ (\bibinfo  {publisher} {Cambridge University Press},\
  \bibinfo {address} {{Cambridge, UK}},\ \bibinfo {year} {2013})\BibitemShut
  {NoStop}%
\bibitem [{\citenamefont {Kitaev}(2003)}]{Kitaev:97}%
  \BibitemOpen
  \bibfield  {author} {\bibinfo {author} {\bibfnamefont {A.}~\bibnamefont
  {Kitaev}},\ }\href@noop {} {\bibfield  {journal} {\bibinfo  {journal} {Ann.
  of Phys.}\ }\textbf {\bibinfo {volume} {303}},\ \bibinfo {pages} {2}
  (\bibinfo {year} {2003})}\BibitemShut {NoStop}%
\bibitem [{\citenamefont {{E. Dennis, A. Kitaev, A. Landahl, and J.
  Preskill}}(2002)}]{Dennis:02}%
  \BibitemOpen
  \bibfield  {author} {\bibinfo {author} {\bibnamefont {{E. Dennis, A. Kitaev,
  A. Landahl, and J. Preskill}}},\ }\href@noop {} {\bibfield  {journal}
  {\bibinfo  {journal} {J. Math. Phys.}\ }\textbf {\bibinfo {volume} {43}},\
  \bibinfo {pages} {4452} (\bibinfo {year} {2002})}\BibitemShut {NoStop}%
\bibitem [{\citenamefont {Bacon}\ \emph {et~al.}(2001)\citenamefont {Bacon},
  \citenamefont {Brown},\ and\ \citenamefont {Whaley}}]{Bacon:01}%
  \BibitemOpen
  \bibfield  {author} {\bibinfo {author} {\bibfnamefont {D.}~\bibnamefont
  {Bacon}}, \bibinfo {author} {\bibfnamefont {K.~R.}\ \bibnamefont {Brown}}, \
  and\ \bibinfo {author} {\bibfnamefont {K.~B.}\ \bibnamefont {Whaley}},\
  }\href {http://link.aps.org/doi/10.1103/PhysRevLett.87.247902} {\bibfield
  {journal} {\bibinfo  {journal} {Phys. Rev. Lett.}\ }\textbf {\bibinfo
  {volume} {87}},\ \bibinfo {pages} {247902} (\bibinfo {year}
  {2001})}\BibitemShut {NoStop}%
\bibitem [{\citenamefont {Jordan}\ \emph {et~al.}(2006)\citenamefont {Jordan},
  \citenamefont {Farhi},\ and\ \citenamefont {Shor}}]{PhysRevA.74.052322}%
  \BibitemOpen
  \bibfield  {author} {\bibinfo {author} {\bibfnamefont {S.~P.}\ \bibnamefont
  {Jordan}}, \bibinfo {author} {\bibfnamefont {E.}~\bibnamefont {Farhi}}, \
  and\ \bibinfo {author} {\bibfnamefont {P.~W.}\ \bibnamefont {Shor}},\ }\href
  {\doibase 10.1103/PhysRevA.74.052322} {\bibfield  {journal} {\bibinfo
  {journal} {Phys. Rev. A}\ }\textbf {\bibinfo {volume} {74}},\ \bibinfo
  {pages} {052322} (\bibinfo {year} {2006})}\BibitemShut {NoStop}%
\bibitem [{\citenamefont {Pudenz}\ \emph {et~al.}(2014)\citenamefont {Pudenz},
  \citenamefont {Albash},\ and\ \citenamefont {Lidar}}]{PAL:13}%
  \BibitemOpen
  \bibfield  {author} {\bibinfo {author} {\bibfnamefont {K.~L.}\ \bibnamefont
  {Pudenz}}, \bibinfo {author} {\bibfnamefont {T.}~\bibnamefont {Albash}}, \
  and\ \bibinfo {author} {\bibfnamefont {D.~A.}\ \bibnamefont {Lidar}},\ }\href
  {\doibase 10.1038/ncomms4243} {\bibfield  {journal} {\bibinfo  {journal}
  {Nat. Commun.}\ }\textbf {\bibinfo {volume} {5}},\ \bibinfo {pages} {3243}
  (\bibinfo {year} {2014})}\BibitemShut {NoStop}%
\bibitem [{\citenamefont {Bookatz}\ \emph {et~al.}(2014)\citenamefont
  {Bookatz}, \citenamefont {Farhi},\ and\ \citenamefont
  {Zhou}}]{Bookatz:2014uq}%
  \BibitemOpen
  \bibfield  {author} {\bibinfo {author} {\bibfnamefont {A.~D.}\ \bibnamefont
  {Bookatz}}, \bibinfo {author} {\bibfnamefont {E.}~\bibnamefont {Farhi}}, \
  and\ \bibinfo {author} {\bibfnamefont {L.}~\bibnamefont {Zhou}},\ }\href
  {http://arXiv.org/abs/1407.1485} {\bibfield  {journal} {\bibinfo  {journal}
  {arXiv:1407.1485}\ } (\bibinfo {year} {2014})}\BibitemShut {NoStop}%
\bibitem [{\citenamefont {Terhal}(2015)}]{Terhal:2015dq}%
  \BibitemOpen
  \bibfield  {author} {\bibinfo {author} {\bibfnamefont {B.~M.}\ \bibnamefont
  {Terhal}},\ }\href {http://link.aps.org/doi/10.1103/RevModPhys.87.307}
  {\bibfield  {journal} {\bibinfo  {journal} {Reviews of Modern Physics}\
  }\textbf {\bibinfo {volume} {87}},\ \bibinfo {pages} {307} (\bibinfo {year}
  {2015})}\BibitemShut {NoStop}%
\bibitem [{\citenamefont {Knill}\ and\ \citenamefont
  {Laflamme}(1997)}]{Knill:1997kx}%
  \BibitemOpen
  \bibfield  {author} {\bibinfo {author} {\bibfnamefont {E.}~\bibnamefont
  {Knill}}\ and\ \bibinfo {author} {\bibfnamefont {R.}~\bibnamefont
  {Laflamme}},\ }\href {http://link.aps.org/doi/10.1103/PhysRevA.55.900}
  {\bibfield  {journal} {\bibinfo  {journal} {Phys. Rev. A}\ }\textbf {\bibinfo
  {volume} {55}},\ \bibinfo {pages} {900} (\bibinfo {year} {1997})}\BibitemShut
  {NoStop}%
\bibitem [{com({\natexlab{b}})}]{comment-anti-Zeno}%
  \BibitemOpen
  \href@noop {} {} ({\natexlab{b}}),\ \bibinfo {note} {{This is conceptually
  similar to the phenomenon of accelerated decoherence via the inverse-Zeno
  effect \cite{Kofman:00,Facchi:05}.}}\BibitemShut {Stop}%
\bibitem [{\citenamefont {{N.V. Prokof'ev and P.C.E.
  Stamp}}(2000)}]{Prokofev:00}%
  \BibitemOpen
  \bibfield  {author} {\bibinfo {author} {\bibnamefont {{N.V. Prokof'ev and
  P.C.E. Stamp}}},\ }\href@noop {} {\bibfield  {journal} {\bibinfo  {journal}
  {{Rep. Prog. Phys.}}\ }\textbf {\bibinfo {volume} {63}},\ \bibinfo {pages}
  {669} (\bibinfo {year} {2000})}\BibitemShut {NoStop}%
\bibitem [{SM_()}]{SM_CO}%
  \BibitemOpen
  \href@noop {} {}\bibinfo {note} {See Supplemental Material [url], which
  includes Refs \cite{Bhatia:book}, \cite{nielsen2000quantum},
  \cite{Bravyi20112793}, \cite{Ng:2011dn}, \cite{Reichardt:2004},
  \cite{Zanardi:2014fr}}\BibitemShut {NoStop}%
\bibitem [{\citenamefont {Uhlmann}(1976)}]{Uhlmann}%
  \BibitemOpen
  \bibfield  {author} {\bibinfo {author} {\bibfnamefont {A.}~\bibnamefont
  {Uhlmann}},\ }\href {\doibase dx.doi.org/10.1016/0034-4877(76)90060-4}
  {\bibfield  {journal} {\bibinfo  {journal} {Reports on Mathematical Physics}\
  }\textbf {\bibinfo {volume} {9}},\ \bibinfo {pages} {273} (\bibinfo {year}
  {1976})}\BibitemShut {NoStop}%
\bibitem [{\citenamefont {Jozsa}(1994)}]{Fidelity_Jozsa}%
  \BibitemOpen
  \bibfield  {author} {\bibinfo {author} {\bibfnamefont {R.}~\bibnamefont
  {Jozsa}},\ }\href@noop {} {\bibfield  {journal} {\bibinfo  {journal} {Journal
  of Modern Optics}\ }\textbf {\bibinfo {volume} {41}},\ \bibinfo {pages}
  {2315} (\bibinfo {year} {1994})}\BibitemShut {NoStop}%
\bibitem [{\citenamefont {Bures}(1969)}]{bures}%
  \BibitemOpen
  \bibfield  {author} {\bibinfo {author} {\bibfnamefont {D.~J.~C.}\
  \bibnamefont {Bures}},\ }\href {\doibase 10.1090/S0002-9947-1969-0236719-2}
  {\bibfield  {journal} {\bibinfo  {journal} {Trans. Amer. Math. Soc.}\
  }\textbf {\bibinfo {volume} {135}},\ \bibinfo {pages} {199} (\bibinfo {year}
  {1969})}\BibitemShut {NoStop}%
\bibitem [{\citenamefont {Zanardi}\ and\ \citenamefont
  {Campos~Venuti}(2014)}]{Zanardi:2014fr}%
  \BibitemOpen
  \bibfield  {author} {\bibinfo {author} {\bibfnamefont {P.}~\bibnamefont
  {Zanardi}}\ and\ \bibinfo {author} {\bibfnamefont {L.}~\bibnamefont
  {Campos~Venuti}},\ }\href
  {http://link.aps.org/doi/10.1103/PhysRevLett.113.240406} {\bibfield
  {journal} {\bibinfo  {journal} {Physical Review Letters}\ }\textbf {\bibinfo
  {volume} {113}},\ \bibinfo {pages} {240406} (\bibinfo {year}
  {2014})}\BibitemShut {NoStop}%
\bibitem [{\citenamefont {Zanardi}\ and\ \citenamefont
  {Rasetti}(1997)}]{Zanardi:97c}%
  \BibitemOpen
  \bibfield  {author} {\bibinfo {author} {\bibfnamefont {P.}~\bibnamefont
  {Zanardi}}\ and\ \bibinfo {author} {\bibfnamefont {M.}~\bibnamefont
  {Rasetti}},\ }\href {http://link.aps.org/doi/10.1103/PhysRevLett.79.3306}
  {\bibfield  {journal} {\bibinfo  {journal} {{Phys.~Rev.~Lett.}}\ }\textbf
  {\bibinfo {volume} {79}},\ \bibinfo {pages} {3306} (\bibinfo {year}
  {1997})}\BibitemShut {NoStop}%
\bibitem [{\citenamefont {Lidar}\ \emph {et~al.}(1998)\citenamefont {Lidar},
  \citenamefont {Chuang},\ and\ \citenamefont {Whaley}}]{Lidar:1998fk}%
  \BibitemOpen
  \bibfield  {author} {\bibinfo {author} {\bibfnamefont {D.~A.}\ \bibnamefont
  {Lidar}}, \bibinfo {author} {\bibfnamefont {I.~L.}\ \bibnamefont {Chuang}}, \
  and\ \bibinfo {author} {\bibfnamefont {K.~B.}\ \bibnamefont {Whaley}},\
  }\href {http://link.aps.org/doi/10.1103/PhysRevLett.81.2594} {\bibfield
  {journal} {\bibinfo  {journal} {Phys. Rev. Lett.}\ }\textbf {\bibinfo
  {volume} {81}},\ \bibinfo {pages} {2594} (\bibinfo {year}
  {1998})}\BibitemShut {NoStop}%
\bibitem [{\citenamefont {Marvian}\ and\ \citenamefont
  {Lidar}(2014)}]{Marvian:2014nr}%
  \BibitemOpen
  \bibfield  {author} {\bibinfo {author} {\bibfnamefont {I.}~\bibnamefont
  {Marvian}}\ and\ \bibinfo {author} {\bibfnamefont {D.~A.}\ \bibnamefont
  {Lidar}},\ }\href {http://link.aps.org/doi/10.1103/PhysRevLett.113.260504}
  {\bibfield  {journal} {\bibinfo  {journal} {Phys. Rev. Lett.}\ }\textbf
  {\bibinfo {volume} {113}},\ \bibinfo {pages} {260504} (\bibinfo {year}
  {2014})}\BibitemShut {NoStop}%
\bibitem [{\citenamefont {Lieb}\ and\ \citenamefont
  {Robinson}(1972)}]{LiebRobinson}%
  \BibitemOpen
  \bibfield  {author} {\bibinfo {author} {\bibfnamefont {E.}~\bibnamefont
  {Lieb}}\ and\ \bibinfo {author} {\bibfnamefont {D.}~\bibnamefont
  {Robinson}},\ }\href {\doibase 10.1007/BF01645779} {\bibfield  {journal}
  {\bibinfo  {journal} {Communications in Mathematical Physics}\ }\textbf
  {\bibinfo {volume} {28}},\ \bibinfo {pages} {251} (\bibinfo {year}
  {1972})}\BibitemShut {NoStop}%
\bibitem [{\citenamefont {Bose}(2003)}]{Bose:2003uq}%
  \BibitemOpen
  \bibfield  {author} {\bibinfo {author} {\bibfnamefont {S.}~\bibnamefont
  {Bose}},\ }\href {http://link.aps.org/doi/10.1103/PhysRevLett.91.207901}
  {\bibfield  {journal} {\bibinfo  {journal} {Physical Review Letters}\
  }\textbf {\bibinfo {volume} {91}},\ \bibinfo {pages} {207901} (\bibinfo
  {year} {2003})}\BibitemShut {NoStop}%
\bibitem [{\citenamefont {Albash}\ and\ \citenamefont
  {Lidar}(2015)}]{Albash:2015nx}%
  \BibitemOpen
  \bibfield  {author} {\bibinfo {author} {\bibfnamefont {T.}~\bibnamefont
  {Albash}}\ and\ \bibinfo {author} {\bibfnamefont {D.~A.}\ \bibnamefont
  {Lidar}},\ }\href {http://arXiv.org/abs/1503.08767} {\bibfield  {journal}
  {\bibinfo  {journal} {arXiv:1503.08767}\ } (\bibinfo {year}
  {2015})}\BibitemShut {NoStop}%
\bibitem [{\citenamefont {{R. Bhatia}}(1997)}]{Bhatia:book}%
  \BibitemOpen
  \bibfield  {author} {\bibinfo {author} {\bibnamefont {{R. Bhatia}}},\
  }\href@noop {} {\emph {\bibinfo {title} {{Matrix Analysis}}}},\ \bibinfo
  {series} {{Graduate Texts in Mathematics}}\ No.\ \bibinfo {number} {169}\
  (\bibinfo  {publisher} {Springer-Verlag},\ \bibinfo {address} {{New York}},\
  \bibinfo {year} {1997})\BibitemShut {NoStop}%
\bibitem [{\citenamefont {Bravyi}\ \emph {et~al.}(2011)\citenamefont {Bravyi},
  \citenamefont {DiVincenzo},\ and\ \citenamefont {Loss}}]{Bravyi20112793}%
  \BibitemOpen
  \bibfield  {author} {\bibinfo {author} {\bibfnamefont {S.}~\bibnamefont
  {Bravyi}}, \bibinfo {author} {\bibfnamefont {D.~P.}\ \bibnamefont
  {DiVincenzo}}, \ and\ \bibinfo {author} {\bibfnamefont {D.}~\bibnamefont
  {Loss}},\ }\href {\doibase http://dx.doi.org/10.1016/j.aop.2011.06.004}
  {\bibfield  {journal} {\bibinfo  {journal} {Annals of Physics}\ }\textbf
  {\bibinfo {volume} {326}},\ \bibinfo {pages} {2793 } (\bibinfo {year}
  {2011})}\BibitemShut {NoStop}%
\bibitem [{Note1()}]{Note1}%
  \BibitemOpen
  \bibinfo {note} {If this is not the case, we can always shift the Hamiltonian
  by a constant and make the minimum energy in $\protect \mathcal {I}$ equal to
  zero.}\BibitemShut {Stop}%
\bibitem [{\citenamefont {Ng}\ \emph {et~al.}(2011)\citenamefont {Ng},
  \citenamefont {Lidar},\ and\ \citenamefont {Preskill}}]{Ng:2011dn}%
  \BibitemOpen
  \bibfield  {author} {\bibinfo {author} {\bibfnamefont {H.~K.}\ \bibnamefont
  {Ng}}, \bibinfo {author} {\bibfnamefont {D.~A.}\ \bibnamefont {Lidar}}, \
  and\ \bibinfo {author} {\bibfnamefont {J.}~\bibnamefont {Preskill}},\ }\href
  {http://link.aps.org/doi/10.1103/PhysRevA.84.012305} {\bibfield  {journal}
  {\bibinfo  {journal} {Phys. Rev. A}\ }\textbf {\bibinfo {volume} {84}},\
  \bibinfo {pages} {012305} (\bibinfo {year} {2011})}\BibitemShut {NoStop}%
\bibitem [{\citenamefont {Reichardt}(2004)}]{Reichardt:2004}%
  \BibitemOpen
  \bibfield  {author} {\bibinfo {author} {\bibfnamefont {B.~W.}\ \bibnamefont
  {Reichardt}},\ }in\ \href {\doibase 10.1145/1007352.1007428} {\emph {\bibinfo
  {booktitle} {Proceedings of the Thirty-sixth Annual ACM Symposium on Theory
  of Computing}}},\ \bibinfo {series and number} {STOC '04}\ (\bibinfo
  {publisher} {ACM},\ \bibinfo {address} {New York, NY, USA},\ \bibinfo {year}
  {2004})\ pp.\ \bibinfo {pages} {502--510}\BibitemShut {NoStop}%
\end{thebibliography}%

\newpage

\onecolumngrid

\newpage

\begin{center}
\Large{\textbf{Supplementary Material}}
\end{center}

\appendix


\section{Closed system-Time dependent perturbations}

In the following theorem we find bounds on the effect of a time-dependent perturbation on the evolution of a closed system. This theorem is the main tool we use to prove bounds \eqref{bound_prob1} and \eqref{main_bound_open}. The theorem has several other interesting applications which we shall explore in future work. The proof is given in two parts, in subsections~\ref{app:Thm:time:closed-proof} and \ref{app:Thm:time:closed-proof-part2}.

\begin{mytheorem}
\label{Thm:time:closed}
Suppose  the eigenvalues of  the Hermitian operator ${H}_{0}$ which are in the interval  $\mc{I}\subseteq\mathbb{R}$ are separated from the rest of the spectrum of ${H}_{0}$ by at least $\Delta E$.
Suppose $P_0$ is the projector to the subspace spanned by the eigenstates of $H_0$ whose corresponding eigenvalues are in $\mc{I}$, and let $Q_0 = I-P_0$.  Suppose $V(t)$ is a time dependent perturbation that is turned on at $t=0$ (i.e., $V(t)=0$ for $t<0$) and which for all $t>0$ is i) differentiable, ii) satisfies $2\|V(t)\|< \Delta E$. Let $U(t)$ be the unitary evolution  generated by the Hamiltonian $H(t)\equiv H_0+V(t)$, i.e., $U(0)=I$ and $d U(t)/dt=-i H(t) U(t)$ . Then,
  \begin{align}\label{bound-main-time2}
\|Q_0{U}(t)P_0\|&\le {2}\frac{ \|V(0^+)\|+\|V(t)\| }{\Delta E}+  \int_{0^+}^{|t|} d\tau \frac{ {2} \|\dot{V}(\tau) \|}{\Delta E-2\|V(\tau)\|}\ ,
\end{align}
where $\|V(0^+)\|$ is the norm of perturbation at $t=0^+$.   

Furthermore, let $\delta E$ be the difference between the minimum and the maximum eigenvalues of $H_0$ in $\mc{I}$.  Let $U_{\infty}(t)$ be the the unitary generated by the Hamiltonian $ H_\infty(t)\equiv P_0H(t)P_0$ (or, equivalently, the unitary generated by $H'_\infty(t)\equiv H_0+P_0 V(t)P_0$), such that $U_{\infty}(0)=I$,  $d U_{\infty} (t)/dt=-i H_{\infty} (t) U_{\infty}(t)$ (or  $d U_{\infty} (t)/dt=-i H'_{\infty} (t) U_{\infty}(t)$). Then, the following bound holds 
\begin{align}
\label{bound-main-time}
&\|{U}_\infty(t)P_0-U(t)P_0\|\le  {4}\left(\frac{\|V(0^+)\|}{\Delta E}+ \int_{0^+}^{|t|} d\tau\  \left[ \frac{ \|V(\tau)\| \left(\delta E+\|V(\tau)\|\right)}{\Delta E} +\frac{\|\dot{V}(\tau) \|}{\Delta E-2\|V(\tau)\|}\right]\right)\ .
\end{align}
\end{mytheorem}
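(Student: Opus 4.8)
The plan is to prove both inequalities by tracking the \emph{instantaneous} spectral projector of the perturbed Hamiltonian, working with $t>0$ (the case $t<0$ follows by time reversal, which is why $|t|$ appears). For each $t$ let $P(t)$ be the spectral projector of $H(t)=H_0+V(t)$ onto the band of eigenvalues that deform continuously from those of $H_0$ inside $\mc{I}$; by Weyl's inequality the hypothesis $2\|V(t)\|<\Delta E$ keeps this band separated from the rest of the spectrum of $H(t)$ by at least $\Delta E-2\|V(t)\|$, so $P(t)$ is well defined and $[H(t),P(t)]=0$. I would first establish two perturbative lemmas: a static bound $\|P(t)-P_0\|\le 2\|V(t)\|/\Delta E$, from the resolvent formula $P(t)-P_0=\frac{1}{2\pi i}\oint (z-H_0)^{-1}V(t)(z-H(t))^{-1}dz$ on a contour at distance $\Delta E/2$ from both parts of the spectrum (or from the Sylvester equation for the off-diagonal block of $P(t)-P_0$); and a dynamic bound $\|\dot P(t)\|\le 2\|\dot V(t)\|/(\Delta E-2\|V(t)\|)$, obtained by solving $[H(t),\dot P(t)]=[P(t),\dot V(t)]$ for the purely off-diagonal $\dot P(t)$, whose source and sink spectra are separated by the perturbed gap. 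The engine of the whole argument is the exact identity $\frac{d}{dt}\big(U(t)^\dagger P(t)U(t)\big)=U(t)^\dagger\dot P(t)U(t)$, which follows from $[H(t),P(t)]=0$ and integrates to $\|U(t)^\dagger P(t)U(t)-P(0^+)\|\le\int_{0^+}^{t}\|\dot P(\tau)\|\,d\tau$.

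For the leakage bound \eqref{bound-main-time2} I would write $Q_0U(t)P_0=Q_0U(t)P(0^+)P_0+Q_0U(t)Q(0^+)P_0$ with $Q(0^+)=I-P(0^+)$. The second term obeys $\|Q(0^+)P_0\|=\|Q(0^+)(P_0-P(0^+))\|\le\|P(0^+)-P_0\|\le 2\|V(0^+)\|/\Delta E$. In the first term I use the adiabatic identity to replace $U(t)P(0^+)$ by $P(t)U(t)-U(t)\int_{0^+}^{t}U^\dagger\dot P U$; since $Q_0P(t)=Q_0(P(t)-P_0)$, this contributes $\|P(t)-P_0\|\le 2\|V(t)\|/\Delta E$ together with $\int_{0^+}^{t}\|\dot P\|\le\int_{0^+}^{t}2\|\dot V\|/(\Delta E-2\|V\|)$. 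Summing the three contributions reproduces \eqref{bound-main-time2} exactly.

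For \eqref{bound-main-time} I would split $\|U_\infty(t)P_0-U(t)P_0\|\le\|Q(t)U(t)P_0\|+\|P(t)U(t)P_0-U_\infty(t)P_0\|$. The leakage piece $\|Q(t)U(t)P_0\|$ is controlled exactly as above by $2\|V(0^+)\|/\Delta E+\int 2\|\dot V\|/(\Delta E-2\|V\|)$. The in-band piece is the substantive one: a Duhamel computation with $W(t)=U(t)^\dagger U_\infty(t)$ gives $U_\infty(t)P_0-U(t)P_0=iU(t)\int_0^t U(\tau)^\dagger Q_0V(\tau)P_0\,U_\infty(\tau)P_0\,d\tau$, using $U_\infty(t)P_0=P_0U_\infty(t)P_0$ and $(V-P_0VP_0)P_0=Q_0VP_0$. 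The naive estimate of this integral only yields $\int\|V\|$, which misses the gap suppression entirely. The key step is to write the off-diagonal coupling as a commutator, $Q_0V(\tau)P_0=[H_0,S(\tau)]$, where $S(\tau)=Q_0S(\tau)P_0$ solves the Sylvester equation with $\|S(\tau)\|\le c\|V(\tau)\|/\Delta E$ and $\|\dot S(\tau)\|\le c\|\dot V(\tau)\|/\Delta E$, and then to integrate by parts in $\tau$ using $U^\dagger H=-i\dot U^\dagger$ and $H'_\infty U_\infty=i\dot U_\infty$. This trades the $O(\|V\|)$ integrand for a boundary term of size $\|S(t)\|+\|S(0^+)\|\lesssim(\|V(t)\|+\|V(0^+)\|)/\Delta E$ together with integrals of $\|\dot S\|\lesssim\|\dot V\|/\Delta E$ and of $\|VS\|,\|SP_0VP_0\|\lesssim\|V\|^2/\Delta E$; bounding $\|V(t)\|\le\|V(0^+)\|+\int\|\dot V\|$ to absorb the right endpoint then yields \eqref{bound-main-time}, the quadratic-in-$V$ contribution furnishing the $\|V\|(\delta E+\|V\|)/\Delta E$ term, with $\delta E$ the scale at which the re-centered band energies enter the off-diagonal generator.

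The step I expect to be the main obstacle is making this last integration by parts rigorous when $H_0$ is \emph{unbounded} -- which is the whole point of the application, since $H_\B(t)$, and hence $H_0$, need not be norm-bounded. The operator $H_0S$ appearing through the commutator is formally unbounded, so one must exploit that $H_0$ only ever acts adjacent to the band projectors or is eliminated through the gap-suppressed generator $S$: after subtracting a global energy constant, a pure phase that cancels in the difference $U_\infty P_0-U P_0$, the relevant quantities $H_0P_0$ and $H(\tau)P(\tau)$ have norm $O(\delta E+\|V\|)$ even though $\|H_0\|=\infty$. Verifying that $S(\tau)$ is bounded and differentiable, and that every boundary and remainder term of the integration by parts is finite and obeys the stated estimate uniformly as the reservoir energy scale diverges, is the technical heart of the proof; everything else reduces to triangle inequalities and the two perturbative lemmas.
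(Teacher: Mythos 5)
Your proof of the first bound \eqref{bound-main-time2} is, up to packaging, the paper's own proof: the three contributions you isolate (band rotation at $0^+$, band rotation at $t$, and the integrated $\|\dot{P}\|$) are exactly the terms the paper extracts via its generic-projector triangle inequality, and your two perturbative lemmas are the paper's Lemma~\ref{First_lem} and Lemma~\ref{main:lemma}, with your ``exact identity'' being Lemma~\ref{lem:S}, bound~\eqref{bound_dot_Pi}.

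For the second bound \eqref{bound-main-time}, however, your route is genuinely different from the paper's, and it works. The paper never writes a Duhamel formula: it inserts the intermediate unitary $U_\tru(t)$ generated by $P(t)H(t)P(t)$, controls $\|U(t)P(0^+)-U_\tru(t)P(0^+)\|$ by $2\int\|\dot{P}\|$ [bound~\eqref{bound_dot_P}], and controls $\|U_\tru(t)-U_\infty(t)\|$ by Lemma~\ref{lemma3} applied to the Hamiltonian difference, bounded as $\|P_0H(t)P_0-P(t)H(t)P(t)\|\le 2(\delta E+\|V\|)\|P(t)Q_0\|$; that is precisely where $\delta E$ enters, through $\|H_0P_0\|\le\delta E$ and $\|P(t)H(t)P(t)\|\le\delta E+\|V\|$ after recentering. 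Your argument---the exact identity
\begin{equation*}
U_\infty(t)P_0-U(t)P_0=i\,U(t)\int_0^t U^\dagger(\tau)\,Q_0V(\tau)P_0\,U_\infty(\tau)P_0\,d\tau\ ,
\end{equation*}
followed by writing $Q_0VP_0=[H_0,S]$ with $S=Q_0SP_0$ and integrating by parts---is a Schrieffer--Wolff/adiabatic-theorem style argument absent from the paper, and the technical points you flag are all fillable with the constants needed: splitting $Q_0=Q^+_0+Q^-_0$ into the spectral parts above and below $\mc{I}$ makes each Sylvester block linearly separated by $\Delta E$, giving $\|S\|\le 2\|V\|/\Delta E$ and, by differentiating the Sylvester equation, $\|\dot{S}\|\le 2\|\dot{V}\|/\Delta E$; moreover $H_0$ never needs to be bounded, since in $[H_0,S]=HS-SH'_\infty-VS+SP_0VP_0$ the unbounded pieces are eliminated dynamically via $U^\dagger H=-i\dot{U}^\dagger$ and $H'_\infty U_\infty=i\dot{U}_\infty$, leaving boundary terms $\|S(t)\|+\|S(0^+)\|$ and remainders $\int(\|\dot{S}\|+2\|V\|\|S\|)$. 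Carried out this way, your bound is in fact slightly \emph{stronger} than \eqref{bound-main-time}: it yields $\|V\|^2/\Delta E$ where the theorem has $\|V\|(\delta E+\|V\|)/\Delta E$, and $\Delta E$ in place of $\Delta E-2\|V\|$, showing that the $\delta E$ dependence is an artifact of the paper's truncation route (or of your alternative bookkeeping through $\|H_0P_0\|\le\delta E$, which also suffices). Two small corrections: (i) your preliminary split of $\|U_\infty(t)P_0-U(t)P_0\|$ into a leakage piece plus an in-band piece is redundant, since the Duhamel identity above already captures the full difference; (ii) your motivation for the unboundedness worry misreads the application---the paper never takes $H_0$ to contain $H_\B(t)$; it first passes to the interaction frame of the reservoir, so that $H_0=H_\sys\otimes I_\B$ and the reservoir enters only through $\dot{V}(t)=Z_\B(t)[H_\B(t),H_\inter]Z^\dag_\B(t)$, which is exactly why only $\|H_\inter\|$ and $\|[H_\inter,H_\B]\|$ appear in the final bounds.
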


Note that the Hamiltonian $H'_\infty(t) \equiv H_0+P_0 V(t) P_0$ is the Hamiltonian obtained from first order perturbation theory in the limit $\Delta E\rightarrow\infty$. Bound~\eqref{bound-main-time} compares the dynamics  generated by the actual Hamiltonian $H(t)$ and the effective Hamiltonian $H'_\infty(t)$, given that the initial state is in the support of $P_0$; if the right-hand side is small then we know that for states which are initially in the support of $P_0$ the final states under these two Hamiltonians are almost the same. The following Corollary highlights this point.

\begin{corollary} 
\label{cor:1}
Let $\ket{\psi}$ be an arbitrary state in the support of $P_0$. Let $\ket{\psi_\infty(t)}=U_\infty(t)\ket{\psi}$ and $\ket{\psi(t)}=U(t)\ket{\psi}$ be the states evolving under the evolutions generated by $H_\infty(t)$ and $H(t)$, respectively. Then the fidelity $F(t) = \left|\bracket{\psi(t)}{\psi_\infty(t)}\right|$ between $\ket{\psi_\infty(t)}$ and $\ket{\psi(t)}$ satisfies the following bound:
\begin{align}
&\sqrt{1-F(t)}\le \frac{ {4}}{\sqrt{2}}\left(\frac{\|V(0^+)\|}{\Delta E}+ \int_{0^+}^{|t|} d\tau\  \left[ \frac{ \|V(\tau)\| \left(\delta E+\|V(\tau)\|\right)}{\Delta E} +\frac{\|\dot{V}(\tau) \|}{\Delta E-2\|V(\tau)\|}\right]\right).
\end{align}

\end{corollary}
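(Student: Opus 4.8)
The plan is to read this corollary off directly from bound~\eqref{bound-main-time} of Theorem~\ref{Thm:time:closed}: all of the genuine dynamical content is already contained in the theorem, and what remains is only to translate the operator-norm estimate on $\|(U_\infty(t)-U(t))P_0\|$ into a statement about the overlap of the two evolved vectors. The bridge is the elementary relation between the Euclidean distance of two normalized state vectors and their fidelity.

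First I would use the hypothesis that $\ket{\psi}$ lies in the support of $P_0$, i.e.\ $P_0\ket{\psi}=\ket{\psi}$, to write
\beq
|\psi_\infty(t)\rangle-|\psi(t)\rangle=\big(U_\infty(t)-U(t)\big)P_0|\psi\rangle\ .
\eeq
Since $\ket{\psi}$ is a unit vector, taking norms and invoking the definition of the operator norm yields $\big\||\psi_\infty(t)\rangle-|\psi(t)\rangle\big\|\le \|(U_\infty(t)-U(t))P_0\|$, which is exactly the quantity controlled by \eqref{bound-main-time}. Next I would relate this vector distance to $F(t)$. Because $U(t)$ and $U_\infty(t)$ are unitary, both $\ket{\psi(t)}$ and $\ket{\psi_\infty(t)}$ are normalized, so
\beq
\big\||\psi_\infty(t)\rangle-|\psi(t)\rangle\big\|^2=2-2\,\mathrm{Re}\,\langle\psi(t)|\psi_\infty(t)\rangle\ \ge\ 2-2\big|\langle\psi(t)|\psi_\infty(t)\rangle\big|=2\big(1-F(t)\big)\ ,
\eeq
where the inequality bounds the real part of the overlap by its modulus. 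Rearranging gives $\sqrt{1-F(t)}\le \tfrac{1}{\sqrt{2}}\big\||\psi_\infty(t)\rangle-|\psi(t)\rangle\big\|$. Chaining the two displays and then inserting bound~\eqref{bound-main-time} produces the stated inequality, the factor $\tfrac{1}{\sqrt{2}}$ combining with the prefactor $4$ in \eqref{bound-main-time} to yield $4/\sqrt{2}$.

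The one step I would flag is the passage $\mathrm{Re}\,\langle\cdot\rangle\le|\langle\cdot\rangle|$: this is where one trades the global-phase-sensitive distance $\big\||\psi_\infty(t)\rangle-|\psi(t)\rangle\big\|$ for the phase-invariant fidelity, and it is the sole inequality in the argument that is not an equality. It is exactly tight when the two states coincide up to norm, loose only to the extent that a residual relative phase between $\ket{\psi(t)}$ and $\ket{\psi_\infty(t)}$ survives. Beyond checking normalization and this elementary estimate there is no real obstacle here: the corollary is essentially Theorem~\ref{Thm:time:closed} restated for initial states supported on $P_0$, so the analytic difficulty resides entirely in the theorem rather than in this consequence.
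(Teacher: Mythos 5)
Your proof is correct and follows essentially the same route as the paper's: both reduce the corollary to bound~\eqref{bound-main-time} by combining the elementary estimate $\sqrt{1-|\langle\phi_1|\phi_2\rangle|}\le \|\ket{\phi_1}-\ket{\phi_2}\|/\sqrt{2}$ with the fact that $P_0\ket{\psi}=\ket{\psi}$ and the definition of the operator norm. The only cosmetic difference is that you derive that elementary inequality explicitly (expanding the squared norm and using $\mathrm{Re}\,\langle\cdot\rangle\le|\langle\cdot\rangle|$), whereas the paper simply cites it.
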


\begin{proof}
This Corollary follows directly from Theorem~\ref{Thm:time:closed} together with the fact that
\begin{align}
\label{opertor-norm}
\sqrt{1-F(t)}&\le \frac{\|\ket{\psi(t)}-\ket{\psi_\infty(t)}\|}{\sqrt{2}}=  \frac{\|[U(t)-U_{\infty}(t)]\ket{\psi}\|}{\sqrt{2}}\nonumber\\  
& \leq \max_{\ket{\phi} \in \textrm{supp}(P_0)} \frac{\|[U(t)-U_{\infty}(t)]\ket{\phi}\|}{\sqrt{2}}
= \frac{\|[U(t)-U_{\infty}(t)]P_0\|}{\sqrt{2}}\ ,
\end{align}
where in the first inequality we used $\sqrt{1-|\bracket{\phi_1}{\phi_2}|}\le (\|\ket{\phi_1}-\ket{\phi_2}\|)/\sqrt{2}$, and in the last equality we used the definition of the operator norm.
\end{proof}

\section{From time-dependent closed systems to open systems (Proof of bounds~\eqref{bound_prob1}, \eqref{main_bound_open} and \eqref{general_bound})}

\subsection{Proof idea}

To prove our bounds we consider the system and reservoir together as a closed system, and we transform to the rotating frame defined by the transformation $|\phi(t)\rangle\mapsto |\hat{\phi}(t)\rangle=Z_\B(t) |\phi(t)\rangle$,  where $Z_\B(t)$ denotes the unitary generated by the reservoir  Hamiltonian $H_\B(t)$, i.e. 
\begin{equation}\label{def:Z_B}
\dot{Z}_\B(t)=i Z_\B(t) H_\B(t), \ \  Z_\B(0)=I\ .
\end{equation}
So, $Z_\B(t)=e^{i H_\B t}$ in the special case where $H_\B(t)=H_\B$, i.e., is time-independent. 

In this rotating frame the evolution of  system and reservoir is governed by the Hamiltonian $H_\sys+\hat{H}_\inter(t)$, where 
\beq
\hat{H}_\inter(t)= Z_\B(t) H_\inter  Z_\B^\dagger(t)\ .
\eeq 
Note that the reduced state of the system does not change under this change of reference frame. In the rotating frame all the relevant information about the reservoir Hamiltonian and the interaction of the system with the reservoir is encoded in the term $\hat{H}_\inter(t)$. Thus, e.g., the difference between a reservoir with a ``strong'' Hamiltonian $H_\B(t)$ and a reservoir with a ``weak'' Hamiltonian $k H_\B(t)$ with $k\ll 1$, is that in the case of the strong Hamiltonian $\hat{H}_\inter(t)$ oscillates much faster. Furthermore, any property of the reservoir Hamiltonian that does not show up in $\hat{H}_\inter(t)$ is irrelevant from the point of view of the system, i.e., the reduced state of the system remains independent of such a property. 

To prove bounds \eqref{bound_prob1} and \eqref{main_bound_open} we treat  the interaction term $\hat{H}_\inter(t)$ as a time-dependent perturbation of the system Hamiltonian $H_\sys$. In particular, to prove bound \eqref{main_bound_open}
 we  approximate the evolution generated by the Hamiltonian  $H_\sys+\hat{H}_\inter(t)$ with the evolution generated by  the Hamiltonian obtained from first order perturbation theory, i.e., 
\begin{align}
\label{app_Ham}
H_1(t) \equiv H_\sys+\left(P_\mc{C}\otimes I_\B\right) \hat{H}_\inter(t) \left(P_\mc{C}\otimes I_\B\right)\ .
\end{align}
Then we use Theorem~\ref{Thm:time:closed} to find an upper bound on the error we make because of this approximation. In particular, in the limit where the gap $\Delta E \rightarrow \infty$ the approximation error vanishes. Indeed, with the exception of the term proportional to $\IS(P_0 H_\inter P_0)$, all the terms in bound~\eqref{main_bound_open} arise from this approximation, and they appear because we use $H_1(t)$ instead of the actual Hamiltonian $H_\sys+\hat{H}_\inter(t)$.  
The origin of the term involving $\IS(P_0 H_\inter P_0)$ in bound~\eqref{main_bound_open} is the second term in  $H_1$, and it basically describes how the original evolution of the system, which is generated by $H_\sys$, is affected by the second term in $H_1(t)$.  In particular, if $\IS(P_0 H_\inter P_0)=0$ then during the evolution generated by $H_1(t)$ the system is unaffected by the reservoir.

\subsection{Proof of bound~\eqref{bound_prob1}}
\label{app:eq2-proof}
\begin{proof}
First we move to the rotating frame defined by the transformation 
\bes
\begin{align}
\ket{{\Psi}(t)}&\mapsto \ket{\hat{\Psi}(t)}=Z_\B(t) \ket{{\Psi}(t)}\\
O(t)&\mapsto \hat{O}(t)=Z_\B(t) O(t) Z^\dag_\B(t)\ ,
\end{align}
\ees
and the hat denotes states and operators in the rotating frame. 
The system and reservoir together are a closed system and as a joint system they evolve unitarily.  In the rotating frame $\ket{\hat{\Psi}(t)}$, the joint state of the system and reservoir, evolves according to the Schr\"odinger equation 
\begin{align}
\frac{d}{dt} \ket{\hat{\Psi}(t)}=-i (H_\sys+\hat{H}_\inter (t))\ket{\hat{\Psi}(t)}\ .
\end{align}
The main idea is to treat $\hat{H}_\inter (t)$ as a time-dependent perturbation on $H_\sys$ and to use the first part of Theorem~\ref{Thm:time:closed}. Let us write bound \eqref{bound-main-time2} as
  \begin{align}
  \label{bound-main-time2ap}
\|Q_0{S}(t)P_0\|\le  {2} \frac{\|V(0^+)\|+\|V(t)\|}{\Delta E} +  \int_{0^+}^{|t|} d\tau \frac{ {2}\|\dot{V}(\tau) \|}{\Delta E-2\|V(\tau)\|}\ ,
\end{align}
where $S(t)$ is the unitary generated by $H_0+V(t)$, such that $dS(t)/dt =-i(H_0+V(t))S(t)$\ . 

To apply this bound we identify $H_\sys\otimes I_\B$ as the initial Hamiltonian in Theorem~\ref{Thm:time:closed}, i.e., 
\beq
H_0=H_\sys\otimes I_\B\ ,\ \ \ \ \ \ \ \ \ \ \ P_0=P_\mc{C}\otimes I_\B\ \ \ \ \ \ \ \text{and}\ \ \ \ \ \ \ V(t)=\hat{H}_\inter (t)\ .
\eeq
By definition $S(t)$ is then the unitary generated by $H_\sys\otimes I_\B+\hat{H}_\inter (t)$, i.e., it is the unitary that describes the evolution of the system and reservoir in the rotating frame. This means that  $S(t)=Z_\B(t) U(t)$, where $U(t)$ is the unitary describing the evolution of system in the lab frame, i.e., it is the solution of $\dot{U}=-i [H_\sys+H_\B(t)+H_\inter,U(t)]$, $U(0)=I$.  Then, using the fact that $H_\B(t)$ commutes with $P_\mc{C}$ it follows that $Z_\B(t)$ commutes with $P_0$, and so the left-hand side of bound~\eqref{bound-main-time2ap} is 
\begin{align}
\textrm{LHS}=\|Q_0Z_\B(t) U(t) P_0\|=\|Z_\B(t) Q_0 U(t) P_0\|= \|Q_0 U(t) P_0\|\ .
\end{align}
On the other hand, using the facts that $\| \hat{H}_\inter (t)\|=\| {H}_\inter (t)\|$ and 
$\left\| \frac{d}{dt}  \hat{H}_\inter (t) \right\| =\left\| Z_\B(t)  [ H_\B(t), {H}_\inter  ] Z^\dag_\B(t)  \right\|=\left\|  [ H_\B(t), {H}_\inter  ]  \right\|$,
we find that the right-hand side of bound~\eqref{bound-main-time2ap} is
 \begin{align}
\textrm{RHS}= \frac{ {4} \|H_\inter  \|}{\Delta E} +  \int_{0^+}^{|t|} d\tau \frac{ {2} \| [ H_\B(t), H_\inter  ] \|}{\Delta E-2\|H_\inter \|} \ .
\end{align}
Therefore, bound~\eqref{bound-main-time2ap} implies
 \begin{align}\label{bbb2}
\|Q_0 U(t) P_0\|\le\frac{ {4} \|H_\inter  \|}{\Delta E} +  \int_{0^+}^{|t|} d\tau \frac{ {2} \| [ H_\B(t), H_\inter  ] \|}{\Delta E-2\|H_\inter \|} \ .
\end{align}

To prove bound~\eqref{bound_prob1} we assume  that the support of the initial reduced density operator of the system is restricted to $\mc{C}$, i.e.,  
\beq
P_0\ \sigma_\textrm{SR}(0) P_0= \left(P_\mc{C}\otimes {I}_\B\right)\sigma_\textrm{SR}(0) \left(P_\mc{C}\otimes {I}_\B\right)= \sigma_\textrm{SR}(0) \ ,
\eeq
where $\sigma_\textrm{SR}(0)$ is the initial joint system-reservoir state. Recall that the leakage probability is $p_\leak(t)\equiv\Tr\!\left[ Q_\mc{C} \rho(t) \right]$. It follows from bound~\eqref{bbb2} that 
\bes
\begin{align}
p_\leak(t)&=\Tr \left[ Q_0 U(t)  \sigma_\textrm{SR}(0) U^\dag(t) \right]=\Tr \left[ Q_0 U(t)  P_0 \sigma_\textrm{SR}(0) P_0 U^\dag(t) Q_0 \right] \\ 
\label{eq:F10c}
&\le \|Q_0 U(t)  P_0\|^2 \|\sigma_\textrm{SR}(0)\|_1\\ 
&\le   \left(\frac{ {4} \|H_\inter  \|}{\Delta E} +  \int_{0^+}^{|t|} d\tau \frac{ {2} \| [ H_\B(t), H_\inter  ] \|}{\Delta E-2\|H_\inter \|}\right)^2 \ ,
\end{align}
\ees
where to obtain bound~\eqref{eq:F10c} we used the inequality $|\Tr(AB)| \leq \|A\| \|B\|_1$, valid for any two operators $A$ and $B$ \cite{Bhatia:book}.
\end{proof}


\subsection{Proof of bound \eqref{main_bound_open}}
\label{app:eq4-proof}
\begin{proof}
We shall use the second part of Theorem~\ref{Thm:time:closed}. We consider the system and reservoir together as a closed system. Without loss of generality we assume that the state of the joint system is pure. If this is not the case then we can always purify the joint state by adding an auxiliary system which has a trivial Hamiltonian and is not interacting with the system and the reservoir.  

Let $\ket{\Phi(0)}$ be  the joint state of system and reservoir at $t=0$, which need not be factorizable. Then, the joint state of system and reservoir at time $t$ is  given by $\ket{\Phi(t)}=U(t) \ket{\Phi(0)}$, where $U(t)$ is the joint unitary evolution of the system and reservoir, and is generated by $H_\tot=H_\sys+H_\B(t)+H_\inter$, i.e.  $d{U}(t)/dt=-iH_\text{tot}(t)U(t)$, and $U(0)=I$ . Thus $\ket{\Phi(t)}$ is a purification of $\rho(t)$. On the other hand,  $e^{- i t H_\sys}\ket{\Phi(0)}$ is a purification of  ${\rho_\id(t)}$, i.e.,
\beq
\Tr_\B\left(e^{- i t H_\sys}\ket{\Phi(0)} \langle\Phi(0)| e^{i t H_\sys}  \right)=\rho_\id(t)\ .
\eeq
Next, consider the Hamiltonian 
\beq
H_\dec(t)\equiv H_\sys+H_\B(t)+ P_\mc{C}\otimes K_\B ,
\eeq
where $K_\B\in  \textrm{Herm}(\mc{H}_\B)$ is an arbitrary Hermitian operator acting on the reservoir Hilbert space, which will be determined later [below Eq.~\eqref{bound_FD}]. Note that this Hamiltonian can be obtained from the original Hamiltonian of the system and reservoir  by replacing $H_\inter $ with $ P_\mc{C}\otimes K_\B$. Also, note that from the point of view of states which are in the code subspace $P_\mc{C}$, under evolution generated by this Hamiltonian the system and reservoir are decoupled from each other:  Let $U_\dec(t) $ be the unitary evolution generated by $H_\dec(t)$, such that $d{U}_\dec(t)/dt=-iH_\dec(t)U_\dec(t)$, and $U_\dec(0)=I$ . Then, using the fact that  $P_\mc{C}$ commutes with $H_\sys$, and $H_\B(t)+ P_\mc{C}\otimes K_\B$ acts trivially on  $|\Phi(0)\rangle$ we find
\begin{align}
\Tr_\B\left(U_\dec(t) \ket{\Phi(0)} \langle\Phi(0)| U^\dag_\dec(t)   \right)=
\Tr_\B\left(e^{- i t H_\sys}\ket{\Phi(0)} \langle\Phi(0)| e^{i t H_\sys}  \right)=\rho_\id(t)
\end{align}
In other words, $U_\dec(t)\ket{\Phi(0)}$ is another purification of $\rho_\id(t)$.

Since $U(t) \ket{\Phi(0)}$ and  $U_\dec(t)\ket{\Phi(0)}$ are, respectively, purifications of $\rho(t)$ and $\rho_\id(t)$, we can use Uhlmann's theorem \cite{nielsen2000quantum}: 
\beq
F[\rho(t),\rho_\id(t)]\ge \left|  \bra{\Phi(0)} U^\dag_\dec(t) U(t) \ket{\Phi(0)}  \right| .  
\eeq
Define 
\beq
\label{H-def:inf}
H_\infty(t) \equiv H_\sys+H_\B(t)+P_0 H_\inter  P_0\ ,\qquad P_0=P_\mc{C}\otimes I_\B ,
\eeq
and let $U_\infty(t)$ be the unitary generated by $H_\infty(t)$, i.e.
\beq
 \frac{dU_\infty(t)}{dt}=-i t H_\infty(t) U_\infty(t), \qquad  U_\infty(0)=I  .
\eeq
Then,  using the inequality $\sqrt{1-|\bracket{\phi_1}{\phi_2}|}\le (\|\ket{\phi_1}-\ket{\phi_2}\|)/\sqrt{2}$ and the triangle inequality we then find
\bes
\label{Fidelity1}
\begin{align}  
\sqrt{1-F[\rho(t),\rho_\id(t)]}&\le \sqrt{1- \left|  \bra{\Phi(0)} U^\dag_\dec(t) U(t) \ket{\Phi(0)}  \right| }  \le  \frac{\|U(t) \ket{\Phi(0)}  -U_\dec(t) \ket{\Phi(0)}  \| }{\sqrt{2}}\\ 
\label{eq:F20b}
&\le  \frac{\|U(t)P_0-U_\dec(t)P_0 \| }{\sqrt{2}} \leq \frac{\|U(t)P_0-U_\infty(t)P_0\|+\|U_\infty(t)P_0-U_\dec(t)P_0\|}{\sqrt{2}}\\
&\le   \frac{\|U(t)P_0-U_\infty(t)P_0\|+\|U_\infty(t)-U_\dec(t)\|}{\sqrt{2}}\ ,
\end{align}
\ees
where in inequality~\eqref{eq:F20b} we used the definition of the operator norm (this is essentially identical to the proof of Corollary~\ref{cor:1}).

Using Lemma~\ref{lemma3} we have 
\begin{align}
\label{bound_FD}
\|U_\infty(t)-U_\dec(t)\| \le \int_{0^+}^{|t|} \ ds \|H_\infty(s)-H_\dec(s)\| =
|t|\, \|P_0 H_\inter P_0-P_\mc{C}\otimes K_\B \|\ .
\end{align}
Now we choose  $K_\B$ to be the Hermitian operator for which $\|P_0 H_\inter P_0-P_\mc{C}\otimes K_\B \|$ is minimized.
Bound~\eqref{bound_FD} thus implies
\begin{align}
\label{bound_FD2}
\|U_\infty(t)-U_\dec(t)\|\le |t|\, \min_{K_\B\in  \textrm{Herm}(\mc{H}_\B)} \|P_0 H_\inter P_0-P_\mc{C}\otimes K_\B \| \le  |t|\ \IS(P_0 H_\inter P_0)\ ,
\end{align}
where we used definition~\eqref{IS:Def}. As we mentioned in the main text, note that $\IS(P_0 H_\inter P_0)$ can be nonzero only when $\mc{C}$ has dimension larger than one. To see this note that if $\dim\mc{C}=1$ then $P_\mc{C} = \ketbra{\psi}{\psi}$ for a normalized state $|\psi\rangle$; then, choosing  $K_\B=\langle\psi|H_\inter|\psi\rangle$ we find  
$(|\psi \rangle\langle\psi|\otimes I_\B) H_\inter(|\psi \rangle\langle\psi|\otimes I_\B)=|\psi \rangle\langle\psi|\otimes\langle\psi| H_\inter|\psi \rangle$, and so $\IS(P_0 H_\inter P_0)=0$ in this case.

Using bound~\eqref{bound_FD2} in bound~\eqref{Fidelity1}, we find
\begin{align}  
\label{Fidelity3}
\sqrt{1-F[\rho(t),\rho_\id(t)]}&\le \frac{\|U(t)P_0-U_\infty(t)P_0\|}{\sqrt{2}}+ \frac{ |t|}{\sqrt{2}}\IS(P_0 H_\inter P_0) \ .
\end{align}
We show below, using bound~\eqref{bound-main-time} of Theorem~\ref{Thm:time:closed}, that
\beq
\label{Fid_b1}
\|U(t)P_0- U_{\infty}(t)P_0\|\le \frac{ {4}\|H_\inter \|}{\Delta E}+  {4}|t|  \frac{\|H_\inter \|{(\delta E+\|H_\inter \|)}}{{\Delta E}}+ {4} \int_{0^+}^{|t|} d\tau \frac{\|[H_\inter ,H_\B(\tau)]\|}{\Delta E-2 \|H_\inter \|}\ \ .
\eeq
Combining this with bound~\eqref{Fidelity3} proves bound~\eqref{main_bound_open}. It thus remains to prove bound~\eqref{Fid_b1}, which we do next.

The argument is analogous to the argument we used to prove bound~\eqref{bound_prob1} using bound~\eqref{bound-main-time2}. As in Sec.~\ref{app:eq2-proof} we start by transforming to the rotating frame defined by $\ket{{\Psi}(t)}\mapsto \ket{\hat{\Psi}(t)}=Z_\B(t) \ket{{\Psi}(t)}$, where $Z_\B(t)$ is defined in Eq.~\eqref{def:Z_B}. In this rotating frame $\ket{\hat{\Psi}(t)}$ (the joint state of the system and reservoir) evolves according to the Schr\"odinger equation $\frac{d}{d\tau} \ket{\hat{\Psi}(t)}=-i (H_\sys+\hat{H}_\inter (t))\ket{\hat{\Psi}(t)}$, 
where $\hat{H}_\inter (t)=Z_\B(t) {H}_\inter  Z^\dag_\B(t)$.  Again, the  idea is to treat $\hat{H}_\inter (t)$ as a time-dependent perturbation on $H_\sys$ and to use bound~\eqref{bound-main-time} of Theorem~\ref{Thm:time:closed} to estimate the effect of this perturbation. Let us rewrite bound~\eqref{bound-main-time} in the form
\begin{align}
\label{Eq:copy1}
\|{S}_\infty(t)P_0-S(t)P_0\|\le \frac{ {4}\|V(0^+)\|}{\Delta E}+\frac{ {4}}{\Delta E} \int_{0^+}^{|t|} d\tau\   \|V(\tau)\| \left(\delta E+\|V(\tau)\|\right)+  {4}  \int_{0^+}^{|t|} d\tau \frac{\|\dot{V}(\tau) \|}{\Delta E-2\|V(\tau)\|} \ ,
\end{align}
where $S(t)$ and ${S}_\infty(t)$ are, respectively, the unitaries generated by the Hamiltonians $H_0+V(t)$ and $H_0+P_0 V(t) P_0$.
 
To apply bound~\eqref{bound-main-time} we identify $H_\sys\otimes I_\B$ as the initial Hamiltonian $H_0$ in Theorem~\ref{Thm:time:closed}, i.e.,
\beq\label{identific}
H_0=H_\sys\otimes I_\B\ ,\ \ \ \ \ \ \ \ \ \ \ P_0=P_\mc{C}\otimes I_\B\ \ \ \ \ \ \ \text{and}\ \ \ \ \ \ \ V(t)=\hat{H}_\inter (t)\ .
\eeq

First, consider the left-hand side of Eq.~\eqref{Eq:copy1}. Since $S(t)$ is the unitary generated by $H_\sys+\hat{H}_\inter (t)$, then $S(t)=Z_\B(t)U(t)$. Similarly,  ${S}_\infty(t)$ is the unitary generated by 
\beq
H_\sys+(P_\mc{C}\otimes I_\B) \hat{H}_\inter (t) (P_\mc{C}\otimes I_\B) =
H_\sys+ Z_\B(t) \left( P_0 {H}_\inter  P_0 \right)  Z^\dag_\B(t)\ .
\eeq
Going from the rotating frame to the lab frame, one can easily see that this Hamiltonian transforms to  $H_\sys+ H_\B+P_0 {H}_\inter  P_0 $ which, by definition [Eq.~\eqref{H-def:inf}] is $H_\infty$. Since $U_\infty(t)$ is the unitary generated by $H_\infty$, it follows that $S_\infty(t)=Z_\B(t) U_\infty(t)$. Therefore, the left-hand side of Eq.~\eqref{Eq:copy1} is
\begin{align}
\label{LHS5}
\textrm{LHS}=\|S_\infty(t) P_0- S(t) P_0\|=\|Z_\B(t)\left[ {U}(t)P_0- {U}_{\infty}(t)P_0\right]\|=\|{U}(t)P_0- {U}_{\infty}(t)P_0\|\ .
\end{align}
On the other hand, using Eq.~\eqref{identific} we can easily see that the right-hand side of Eq.~\eqref{Eq:copy1} is
\begin{align}
\textrm{RHS}&=\frac{ {4}\|H_\inter \|}{\Delta E}+ \frac{ {4}}{\Delta E}\int_{0^+}^{|t|} d\tau  \|H_\inter \|{(\delta E+\|H_\inter \|)}+ {4} \int_{0^+}^{|t|} d\tau \frac{\|[H_\inter ,H_\B(\tau)]\|}{\Delta E-2 \|H_\inter \|}\\ &= \frac{ {4}\|H_\inter \|}{\Delta E}+  {4}|t|  \frac{\|H_\inter \|{(\delta E+\|H_\inter \|)}}{{\Delta E}}+{4} \int_{0^+}^{|t|} d\tau \frac{\|[H_\inter ,H_\B(\tau)]\|}{\Delta E-2 \|H_\inter \|} \ , \label{RHS5}
\end{align}
where again we used  $\|\hat{H}_\inter (t)\|=\|{H}_\inter \|$, and $\|\frac{d}{d t}\hat{H}_\inter (t)\|=\|[{H}_\inter ,H_\B(t)]\|$.

Substituting Eqs.~\eqref{LHS5} and \eqref{RHS5} into bound~\eqref{Eq:copy1} we find
\begin{align*}
\|U(t)P_0- U_{\infty}(t)P_0\| \le \frac{ {4}\|H_\inter \|}{\Delta E}+  {4}|t|  \frac{\|H_\inter \|{(\delta E+\|H_\inter \|)}}{{\Delta E}}+ {4} \int_{0^+}^{|t|} d\tau \frac{\|[H_\inter ,H_\B(\tau)]\|}{\Delta E-2 \|H_\inter \|}\ ,
\end{align*}
which proves Eq.~\eqref{Fid_b1}, and so completes the proof Eq.~\eqref{main_bound_open}.
\end{proof}

\subsection{Proof of bound~\eqref{general_bound}}
\label{app:eq8-proof}
\begin{proof}
We present the proof for the special case where $H_\B(t)=H_\B$ is time-independent. The time-dependent case follows exactly in the same fashion. 

Let $\ket{\Phi(0)}$ be any arbitrary joint pure state of the system and reservoir at $t=0$. Then, the joint state of the system and reservoir at time $t$ is given by $\ket{\Phi(t)}=e^{-i H_\tot} \ket{\Phi(0)}$, where $H_\tot=H_\sys+H_\B+H_\inter $. So, by definition, $\ket{\Phi(t)}$ is a purification of $\rho(t)$, the reduced state of system at time $t$, i.e.,
\beq
\rho(t)= \Tr_\B\left(\ket{\Phi(t)}\langle\Phi(t)|\right) .
\eeq
On the other hand, one can easily see that $\ket{\Phi_\id(t)}\equiv e^{- i t (H_\sys+H_\B)}\ket{\Phi(0)}$ is a purification of  ${\rho_\id(t)}$, i.e.,
\beq
\Tr_\B\left(\ket{\Phi_\id(t)}\langle\Phi_\id(t)|\right)=e^{- i t H_\sys} \Tr_\B\left(\ket{\Phi(0)} \langle\Phi(0)| \right) e^{i t H_\sys}=e^{- i t H_\sys} \rho(0) e^{i t H_\sys}=  \rho_\id(t)\ .
\eeq
 Therefore, using Uhlmann's theorem we find 
\beq
F[\rho(t),\rho_\id(t)]\ge \left|\langle \Phi(t) \ket{\Phi_\id(t)}\right| .  
\eeq
Next, using the fact that $\sqrt{1-|\langle\phi_1\ket{\phi_2}|}\le (\|\ket{\phi_1}- e^{i\theta} \ket{\phi_2}\|)/\sqrt{2}$, where $\theta$ is an arbitrary phase, we find
\bes
\begin{align}  
\sqrt{1-F[\rho(t),\rho_\id(t)]}&\le \sqrt{1-\left|\langle \Phi(t) \ket{\Phi_\id(t)}\right|}  \le  \frac{\|\ket{\Phi(t)}-e^{-i c t}  \ket{\Phi_\id(t)}\| }{\sqrt{2}}\\
&=\frac{\|e^{-i t H_\tot} \ket{\Phi(0)}- e^{-i c t}  e^{- i t (H_\sys+H_\B)}\ket{\Phi(0)} \| }{\sqrt{2}} \le  \frac{\|e^{-i t H_\tot}- e^{-i c t}  e^{- i t (H_\sys+H_\B)}  \| }{\sqrt{2}}  ,
\end{align}
\ees
where $c$ is an arbitrary real constant. 
Finally, using Lemma~\ref{lemma3} we have
\beq
\|e^{-i t H_\tot}- e^{-i c t} e^{- i t (H_\sys+H_\B)}  \| \le |t| \ \|H_\tot- (H_\sys+H_\B+c I) \|= |t| \ \|H_\inter -c I\| \leq |t|\|H_\inter\| \ ,
\eeq
Thus, $\sqrt{1-F[\rho(t),\rho_\id(t)]}\le |t| \ \|H_\inter -c I\|/\sqrt{2} $ for any $c\in\mathbb{R}$. Let $\lambda_{\max}$ and  $\lambda_{\min}$ be the maximum and minimum eigenvalues of $H_\inter $ respectively. Thus the maximum and minimum eigenvalues of $H_\inter -cI$ are $\lambda_{\max}-c$ and $\lambda_{\min}-c$, respectively. It is easy to see that to minimize $\|H_\inter -c I\|$ we should choose $c=(\lambda_{\max}+\lambda_{\min})/2$, whence $\|H_\inter -c I\|=(\lambda_{\max}-\lambda_{\min})/2$. Therefore 
$\sqrt{1-F[\rho(t),\rho_\id(t)]}/\sqrt{2}\le |t|(\lambda_{\max}-\lambda_{\min})/4 \leq |t|\|H_\inter\| /2$, as claimed.
\end{proof}

\subsection{Proof of transformation~\eqref{change-RF}}
\label{app:gen}

Recall that $\mc{I}$ is an interval of $\mathbb{R}$ which includes at least one eigenvalue of $H_\sys$, $\mc{C}$ is the subspace spanned by the eigenvectors of $H_\sys$ whose eigenvalues are in $\mc{I}$, and $P_\mc{C}$ is the projector onto the subspace $\mc{C}$. Recall that $\Delta E$ denotes the gap between the energy levels of $H_\sys$ inside and outside $\mc{C}$ (i.e., if $\lambda_1$ and $\lambda_2$ are two distinct eigenvalues of $H_\sys$ such that $\lambda_1\in \mc{I}$ but $\lambda_2\notin \mc{I}$, then $|\lambda_1-\lambda_2|>\Delta E$). 

\begin{proof}
Let $Q^{+}_\mc{C}$ ($Q^{-}_\mc{C}$) be the projector onto the subspace spanned by the eigenstates of $H_\sys$ whose eigenvalues are greater (less) than those in $\mc{I}$. This definition implies that 
\beq
Q^{+}_\mc{C}+Q^{-}_\mc{C}+P_\mc{C}=I_\sys \ .
\eeq
To prove bounds~\eqref{bound_prob1}-\eqref{main_bound_open},
we moved to the rotating frame described by the transformation
\begin{align}
\ket{{\Psi}(t)}&\mapsto \ket{\hat{\Psi}(t)}=Z_\B(t) \ket{{\Psi}(t)} ,
\end{align}
where $Z_\B(t)$ is defined in Eq.~\eqref{def:Z_B}.
To prove transformation~\eqref{change-RF}  we move to a new rotating frame defined by
\bes
\begin{align}
\ket{{\Psi}(t)}&\mapsto \ket{\hat{\Psi}(t)}=W_F(t) \ket{{\Psi}(t)} , \\
\label{eq:H3b}
W_F(t)&\equiv \exp(-i t F [Q^{+}_\mc{C}- Q^{-}_\mc{C}])\otimes Z_\B(t) ,  
\end{align}
\ees
(i.e., $W_F(t)$ is generated by $H_W(t) = H_\B(t)- F [Q^{+}_\mc{C}- Q^{-}_\mc{C}]$), where $F$ is an arbitrary real number satisfying
\beq
\label{cond45}
F>2\|H_\inter \|-{\Delta E}\ .
\eeq
We will shortly present  the motivation for this condition (more generally, $F$ can be chosen to be time-dependent, but we shall not consider this here). In the rotating frame $\ket{\hat{\Psi}(t)}$ (the joint state of the system and reservoir) evolves according to the Schr\"odinger equation
\begin{align}
\frac{d}{dt} \ket{\hat{\Psi}(t)}=-i (H_0 +  \hat{H}_\inter (t))\ket{\hat{\Psi}(t)}\ ,
\end{align}
where 
\beq
H_0 \equiv H_\sys+F [Q^{+}_\mc{C}- Q^{-}_\mc{C}]\qquad \textrm{and} \qquad \hat{H}_\inter (t)\equiv W_F(t) {H}_\inter  W^\dag_F(t)\ .
\eeq 
Thus $\ket{\hat{\Psi}(t)} = S(t) \ket{\hat{\Psi}(0)}$, where $S(t)$ is the unitary generated by $H_0+\hat{H}_\inter(t)$. Note that $S(t)$ can be written as $S(t)=W_F(t) U(t)$, where $U(t)$ is the unitary describing the joint evolution of the system and the reservoir in the lab frame, i.e., the solution of $d{U}(t)/dt=-i [H_\sys+H_\B(t)+H_\inter]U(t)$ with $U(0)=I$.

Again, we use the first part of Theorem~\ref{Thm:time:closed}. Let us rewrite bound~\eqref{bound-main-time2} as
\begin{align}
\label{bound-main-time2ap55}
\|Q_0{S}(t)P_0\|\le {2} \frac{\|\hat{H}_\inter(0^+)\|+\|\hat{H}_\inter(t)\|}{\Delta D} +  \int_{0^+}^{|t|} d\tau \frac{{2} \|\partial_\tau{\hat{H}}_\inter(\tau) \|}{\Delta D-2\|\hat{H}_\inter(\tau)\|}\ ,
\end{align}
where $P_0 = P_\mc{C}\otimes {I}_\B$.
Let $\Delta D$ denote the gap of Hamiltonian $H_0$. The conditions of Theorem~\ref{Thm:time:closed} require that $\Delta D>2\|\hat{H}_\inter(\tau)\|$ for all $0\le\tau\le t$, and we next show that this can be satisfied provided we choose $F$ as in condition~\eqref{cond45}.

We claim that the gap $\Delta D$ of $H_0$ is $\Delta D=\Delta E + F$. To see this, first note that by definition $Q^{-}_\mc{C}$ and $Q^{+}_\mc{C}$ are diagonal in the eigenbasis of $H_\sys$, so that $H_0$ is diagonal in the same basis. Adding $F [Q^{+}_\mc{C}- Q^{-}_\mc{C}]$ to $H_\sys$ has the effect of adding $F$ to all the eigenvalues of $H_\sys$ greater than those in $\mc{I}$, and subtracting $F$ from all the eigenvalues of $H_\sys$ less than those in $\mc{I}$, while leaving the eigenvalues in $\mc{I}$ alone. Since the gap between the eigenvalues in $\mc{I}$ and those not in $\mc{I}$ was $\Delta E$ before the addition of $F [Q^{+}_\mc{C}- Q^{-}_\mc{C}]$, it becomes $\Delta E + F$ after this addition, which is then the new gap between $\mc{C}$ and the orthogonal subspace.
Then, since $\|\hat{H}_\inter (t)\|=\|H_\inter \|$,  condition~\eqref{cond45} guarantees that 
\beq
\Delta D=F+\Delta E> 2 \|H_\inter \| \ ,
\eeq
and so we can apply Theorem~\ref{Thm:time:closed} and bound~\eqref{bound-main-time2ap55}.

Using the fact that $H_W(t) = H_\B(t)- F [Q^{+}_\mc{C}- Q^{-}_\mc{C}] $ commutes with $P_\mc{C}$ it follows that $W_F(t)$ commutes with $P_0$, and so the left-hand side of Eq.~\eqref{bound-main-time2ap55} is
\begin{align}
\textrm{LHS}=\|Q_0{S}(t)P_0\|=\|Q_0W_F(t) U(t) P_0\|= \|Q_0 U(t) P_0\|
\end{align}
On the other hand, using the facts that $\| \hat{H}_\inter (t)\|=\| {H}_\inter \|$ and 
\beq
\left\| \partial_t  \hat{H}_\inter (t) \right\| =\left\| W_F(t)[H_\inter,H_W(t)  ] W_F^\dagger(t)  \right\|=\left\| [H_\inter,H_\B(t)-F (Q^{+}_\mc{C}-Q^{-}_\mc{C}) ]  \right\|\ ,
\eeq
we find that the right hand side of bound~\eqref{bound-main-time2ap55} is 
\begin{align}
\textrm{RHS}&= \frac{{4} \|H_\inter  \|}{\Delta E+F} +  \int_{0^+}^{|t|} d\tau \frac{{2} \left\| [H_\inter,H_\B(t)-F (Q^{+}_\mc{C}-Q^{-}_\mc{C}) ] \right\|}{\Delta E+F-2\|H_\inter \|}\ .
\end{align}
Therefore, bound~\eqref{bound-main-time2ap55} implies
\begin{align}
\label{bbb24}
\|Q_0 U(t) P_0\|\le\frac{{4} \|H_\inter  \|}{\Delta E+F} +  \int_{0^+}^{|t|} d\tau \frac{{2} \left\| [H_\inter,H_\B(t)-F (Q^{+}_\mc{C}-Q^{-}_\mc{C}) ] \right\|}{\Delta E+F-2\|H_\inter \|}\ .\ .
\end{align}
Suppose $\sigma_\textrm{SR}(0)$ is the initial joint state of the system and reservoir with the property that the reduced state of the system has support only in $\mc{C}$, i.e.,
\beq
P_0\ \sigma_\textrm{SR}(0) P_0= \left(P_\mc{C}\otimes {I}_\B\right)\  \sigma_\textrm{SR}(0)\  \left(P_\mc{C}\otimes {I}_\B\right)= \sigma_\textrm{SR}(0) \ .
\eeq
Then, bound~\eqref{bbb24} implies 
\bes
\begin{align}
p_\leak(t)&=\Tr \left[U(t)  \sigma_\textrm{SR}(0) U^\dag(t) Q_0 \right]\\ &=\Tr \left[ Q_0 U(t)  P_0 \sigma_\textrm{SR}(0) P_0 U^\dag(t) Q_0 \right] \\ &\le \|Q_0 U(t)  P_0\|^2 \|\sigma_\textrm{SR}(0)\|_1\\ &\le   \left(\frac{{4} \|H_\inter  \|}{\Delta E+F} +  \int_{0^+}^{|t|} d\tau \frac{{2} \left\| [H_\inter,H_\B(t)-F (Q^{+}_\mc{C}-Q^{-}_\mc{C}) ] \right\|}{\Delta E+F-2\|H_\inter \|} \right)^2 \ ,
\end{align}
\ees   
which  is bound~\eqref{bound_prob1} after the replacements 
\bes
\begin{align}
H_\B(t) &\mapsto  H_\B(t)-F (Q^{+}_\mc{C}- Q^{-}_\mc{C})\\
\Delta E &\mapsto \Delta E+F\ .
\end{align} 
\ees
The argument which leads to bound \eqref{main_bound_open} can be repeated in a similar fashion.  
\end{proof}

\section{Proof of theorem \ref{Thm:time:closed}}

\subsection{Preliminaries} 

\subsubsection{Three useful Lemmas}
\begin{mylemma}
\label{lemma3}
Suppose the unitaries $U_1(t)$ and $U_2(t)$ are generated by the Hamiltonians $H_1(t)$ and $H_2(t)$, such that  $U_{1,2}(0)=I$ and $dU_{1,2}(t)/d\tau=-i H_{1,2}(t)U_{1,2}(t)$. Then,
\beq
\| U_1(t)-U_2(t)\| \le    \int_0^{|t|} d\tau\ \|H_1(\tau)-H_2(\tau)\|\ .
\eeq
\end{mylemma}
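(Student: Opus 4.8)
The plan is to reduce the claim to a statement about the single unitary $U_2^\dagger(t)U_1(t)$ and then represent it as an integral. Since the operator norm is unitarily invariant, I would first write $\|U_1(t)-U_2(t)\| = \|U_2^\dagger(t)[U_1(t)-U_2(t)]\| = \|U_2^\dagger(t)U_1(t)-I\|$, using $U_2^\dagger(t) U_2(t) = I$. Thus it suffices to bound the distance of $W(t)\equiv U_2^\dagger(t)U_1(t)$ from the identity, noting that $W(0)=I$.

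Next I would compute $\dot{W}(t)$. From $\dot{U}_1 = -iH_1 U_1$ and the identity $(d/dt)U_2^\dagger = iU_2^\dagger H_2$ (obtained by differentiating $U_2^\dagger U_2 = I$ and using $\dot{U}_2 = -iH_2 U_2$), the product rule gives $\dot{W}(t) = iU_2^\dagger(t)[H_2(t)-H_1(t)]U_1(t)$. Integrating from $0$ to $t$ (taking $t\ge 0$ first; the case $t<0$ is identical after reversing orientation, which is precisely what produces the $|t|$) and using $W(0)=I$ yields the integral representation $W(t)-I = i\int_0^{t} U_2^\dagger(\tau)[H_2(\tau)-H_1(\tau)]U_1(\tau)\,d\tau$.

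Then I would apply the triangle inequality for operator-valued integrals (the norm of the integral is at most the integral of the norm), followed once more by unitary invariance of the operator norm, so that inside the integrand $\|U_2^\dagger(\tau)[H_2(\tau)-H_1(\tau)]U_1(\tau)\| = \|H_1(\tau)-H_2(\tau)\|$. Chaining the three facts — the initial unitary reduction, the integral representation of $W(t)-I$, and the integrand simplification — gives exactly the claimed bound $\|U_1(t)-U_2(t)\| \le \int_0^{|t|} d\tau\,\|H_1(\tau)-H_2(\tau)\|$.

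The argument contains no genuine obstacle: it is a direct estimate rather than a Gr\"onwall-type inequality, precisely because conjugating by $U_1$ and $U_2^\dagger$ removes all dependence on the individual sizes of the two flows and leaves only the Hamiltonian difference. The only points requiring minor care are justifying the product-rule differentiation (guaranteed by the assumed differentiability of the generators, hence of their flows), the passage from $\|\int\,\cdot\|$ to $\int\|\cdot\|$ for operator-valued integrands, and the bookkeeping of $|t|$ for negative times.
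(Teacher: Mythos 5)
Your proposal is correct and matches the paper's own argument essentially step for step: reduce to $\|U_2^\dagger(t)U_1(t)-I\|$ by unitary invariance, express this as the integral of $\frac{d}{ds}\bigl(U_2^\dagger(s)U_1(s)\bigr)=iU_2^\dagger(s)[H_2(s)-H_1(s)]U_1(s)$, then apply the triangle inequality for operator-valued integrals and unitary invariance of the integrand. No gaps; the minor points you flag (differentiability, $|t|$ bookkeeping) are handled the same way in the paper.
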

See Sec.~\ref{app:lemma3-proof} for the proof.

\begin{mylemma}
\label{lem:S}
Let $P(t)$ be a time-dependent projector satisfying $[P(t), H(t)]=0$ for all times $t$, where $H(t)$ is the Hamiltonian generating the unitary $U(t)$, i.e., $\dot{U}(t) = -iH(t) U(t)$ with $U(0)=I$. 
Let $H_\tru(t)=P(t) H(t) P(t)$ be the generator of $U_\tru(t)$, i.e., $dU_\tru(t)/dt=-i H_\tru(t)U_\tru(t)$ with $U_\tru(0)=I\ $.
Assume the time-derivative of  $P(t)$ exists for $t>0\ $. Then
\bes
\begin{align}
\label{bound_dot_Pi}
\|U(t) P({0^+})- P(t)U(t)\| & \leq \int_{0^+}^t \|  \dot{P}(\tau) \| \ d\tau \ , \\
\label{bound_dot_P}
\|U(t)P(0^+)-U_\tru(t)P(0^+)\| & \le 2 \int_{0^+}^{|t|}  \|\dot{P}(\tau)\| d\tau\ .,
\end{align}
\ees
\end{mylemma}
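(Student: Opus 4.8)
The plan is to prove the two inequalities in order and to use the first one, \eqref{bound_dot_Pi}, as the engine for the second, \eqref{bound_dot_P}. Everything rests on the algebra forced by $[P(t),H(t)]=0$. Writing $Q=I-P$, commutativity gives the block structure $H=PHP+QHQ$, whence $PH=HP=PHP=H_\tru$, $H_\tru Q=QH_\tru=0$, $[P,H_\tru]=0$, $QH=QHQ$, and, crucially, $H-H_\tru=QHQ$. I will also use the two elementary projector facts $P\dot P P=Q\dot P Q=0$ (from differentiating $P^2=P$) and the initial conditions $U(0^+)=U_\tru(0^+)=I$. Without loss of generality I would treat $t>0$, the $|t|$ in the statement being handled by the standard time-reversal/orientation argument.

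For \eqref{bound_dot_Pi} I would compute the derivative of the Heisenberg-frame projector $U^\dagger(t)P(t)U(t)$. Using $\dot U=-iHU$, the two Hamiltonian contributions combine into $U^\dagger\, i[H,P]\,U$, which vanishes by $[P,H]=0$, leaving $\tfrac{d}{dt}\!\left[U^\dagger P U\right]=U^\dagger\dot P\,U$. Integrating from $0^+$ and using $U(0^+)=I$ gives $U^\dagger(t)P(t)U(t)-P(0^+)=\int_{0^+}^t U^\dagger\dot P\,U\,d\tau$; the operator norm is unitarily invariant, so $\|U^\dagger P U-P(0^+)\|\le\int_{0^+}^t\|\dot P\|\,d\tau$, and left-multiplying by the unitary $U(t)$ turns this into \eqref{bound_dot_Pi}. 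The identical calculation applied to $U_\tru$ is legitimate because $[P,H_\tru]=0$, and yields the same bound with $U\to U_\tru$. From these I would extract the two facts I actually need, $\|Q(t)U(t)P(0^+)\|\le\int_{0^+}^t\|\dot P\|$ and its $U_\tru$ analogue: writing $U P(0^+)=PU+B$ with $\|B\|\le\int_{0^+}^t\|\dot P\|$ and left-multiplying by $Q$ kills the $PU$ term since $QP=0$.

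For \eqref{bound_dot_P} I would set $R(t)=U_\tru^\dagger(t)U(t)$, so that $U(t)P(0^+)-U_\tru(t)P(0^+)=U_\tru(t)[R(t)-I]P(0^+)$ and the target equals $\|[R(t)-I]P(0^+)\|$. Differentiating gives $\dot R=-iU_\tru^\dagger(H-H_\tru)U=-iU_\tru^\dagger QHQ\,U$. The main obstacle is structural rather than computational: a naive bound on this integrand, or equivalently a direct application of Lemma~\ref{lemma3} to $U$ and $U_\tru$, would produce $\int_0^t\|H-H_\tru\|\,ds=\int_0^t\|QHQ\|\,ds$, which is worthless because $H$ (and hence $QHQ$) is not assumed norm-bounded, whereas the claimed bound depends only on $\int\|\dot P\|$. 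The resolution, which is the crux, is to recognize $QHQ\,U$ as a total derivative: a product-rule computation of $\tfrac{d}{ds}\!\left[U_\tru^\dagger Q U\right]$, using $H_\tru Q=0$, $\dot Q=-\dot P$, and $QH=QHQ$, gives the identity $\dot R=\tfrac{d}{ds}\!\left[U_\tru^\dagger Q U\right]+U_\tru^\dagger\dot P\,U$.

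With that identity in hand the unbounded Hamiltonian disappears. Integrating from $0$ to $t$, acting on $P(0^+)$, and using $R(0)=I$ and $Q(0^+)P(0^+)=0$ leaves $[R(t)-I]P(0^+)=U_\tru^\dagger(t)Q(t)U(t)P(0^+)+\int_0^t U_\tru^\dagger\dot P\,U\,P(0^+)\,ds$. Taking norms and using unitary invariance, the boundary term is bounded by $\|Q(t)U(t)P(0^+)\|\le\int_{0^+}^t\|\dot P\|$ from the first part, and the integral by $\int_{0^+}^t\|\dot P\|$ since $\|U P(0^+)\|\le1$; adding these gives the factor $2$ and establishes \eqref{bound_dot_P}. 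I expect the only subtleties to be bookkeeping: justifying differentiation under the integral and the existence of $\dot P$ (granted in the hypotheses), and confirming the boundary term at $0^+$ vanishes through $Q(0^+)P(0^+)=0$.
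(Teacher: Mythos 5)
Your proof is correct. The first part coincides with the paper's: both differentiate the dressed projector $U^\dagger(t)P(t)U(t)$, use $[P,H]=0$ to cancel the Hamiltonian terms, and integrate. For the second part, however, you take a genuinely different route. The paper splits $\|U(t)P(0^+)-U_\tru(t)P(0^+)\|$ by the triangle inequality through the interpolating operator $P(t)U_\tru(t)$, and bounds each piece by showing that the mixed-frame objects $U^\dagger(t)P(t)U_\tru(t)$ and $U_\tru^\dagger(t)P(t)U_\tru(t)$ both have derivative equal to $\dot{P}$ conjugated by unitaries, so each contributes $\int\|\dot{P}\|$. You instead work with the relative unitary $R=U_\tru^\dagger U$, identify explicitly why the naive route fails (namely $\dot R=-iU_\tru^\dagger QHQ\,U$ with $\|QHQ\|$ unbounded, so Lemma~\ref{lemma3} is useless here), and then remove the unbounded term by the integration-by-parts identity $\dot R=\tfrac{d}{ds}\bigl[U_\tru^\dagger Q U\bigr]+U_\tru^\dagger\dot{P}U$, bounding the resulting boundary term $\|Q(t)U(t)P(0^+)\|$ via the first part. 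The two arguments share the same underlying cancellation mechanism — indeed your identity is, after rearrangement, the adjoint of the paper's identity $\tfrac{d}{ds}\bigl[U^\dagger(s)P(s)U_\tru(s)\bigr]=U^\dagger\dot{P}U_\tru$ — but your packaging corresponds to the decomposition $UP_0-U_\tru P_0=Q U P_0+\bigl[PUP_0-U_\tru P_0\bigr]$ rather than the paper's $\bigl[UP_0-PU_\tru\bigr]+\bigl[PU_\tru-U_\tru P_0\bigr]$. What your version buys is transparency about where the danger of the unbounded generator sits and how it is neutralized; what the paper's version buys is brevity, since it never needs to mention $QHQ$ at all. Both give the same constant $2$.
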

See Sec.~\ref{app:Pdot-proof} for the proof.

\begin{mylemma}
\label{lem:23}
Let $P$ and $\tilde{P}$ be the projectors with the same rank. Then 
\beq
\|P-\tilde{P}\|=\|P\tilde{Q}\|=\|\tilde{P}Q\|\ ,
\eeq
where $Q=I-P$ and $\tilde{Q} = I-\tilde{P}$.
\end{mylemma}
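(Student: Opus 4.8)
The plan is to reduce the problem to a direct computation on small invariant blocks via the simultaneous block-diagonalization of two orthogonal projectors (Jordan's lemma, equivalently the CS decomposition). First I would invoke the fact that for any pair of orthogonal projectors $P,\tilde{P}$ there is an orthonormal basis in which both are block diagonal with blocks of size $1\times 1$ or $2\times 2$: the $1\times 1$ blocks realize one of the pairs $(P,\tilde{P})\in\{(0,0),(1,1),(1,0),(0,1)\}$, and each $2\times 2$ block has the canonical form $P=\mathrm{diag}(1,0)$ and $\tilde{P}=\bigl(\begin{smallmatrix}\cos^2\theta & \cos\theta\sin\theta\\ \cos\theta\sin\theta & \sin^2\theta\end{smallmatrix}\bigr)$ for some $\theta\in(0,\pi/2)$. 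Since $Q=I-P$ and $\tilde{Q}=I-\tilde{P}$ respect the same decomposition, all three operators $P-\tilde{P}$, $P\tilde{Q}$, and $\tilde{P}Q$ are block diagonal, so the operator norm of each equals the maximum of the norms of its blocks.

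Second, I would evaluate the three block norms. A short computation shows that on every $2\times 2$ block all of $\|P-\tilde{P}\|$, $\|P\tilde{Q}\|$, $\|\tilde{P}Q\|$ equal $|\sin\theta|$: there $P-\tilde{P}$ has trace $0$ and determinant $-\sin^2\theta$, hence eigenvalues $\pm\sin\theta$, while $P\tilde{Q}$ and $\tilde{P}Q$ are rank-one with the same single singular value $|\sin\theta|$. On the matched $1\times 1$ blocks $(0,0)$ and $(1,1)$ all three norms vanish. The only asymmetry arises on the mismatched $1\times 1$ blocks: a $(1,0)$ block contributes $1$ to $\|P-\tilde{P}\|$ and to $\|P\tilde{Q}\|$ but $0$ to $\|\tilde{P}Q\|$, whereas a $(0,1)$ block contributes $1$ to $\|P-\tilde{P}\|$ and to $\|\tilde{P}Q\|$ but $0$ to $\|P\tilde{Q}\|$.

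Third, and this is where the equal-rank hypothesis enters, I would use a dimension count to balance the mismatched blocks. Expressing $\mathrm{rank}\,P$ and $\mathrm{rank}\,\tilde{P}$ as sums over blocks, the $(1,1)$ and $2\times 2$ blocks contribute identically to both ranks, so $\mathrm{rank}\,P=\mathrm{rank}\,\tilde{P}$ forces the number of $(1,0)$ blocks to equal the number of $(0,1)$ blocks. Hence a $(1,0)$ block occurs if and only if a $(0,1)$ block occurs, and the three block-wise maxima coincide: each equals $\max_i|\sin\theta_i|$, or $1$ if any mismatched block is present. This gives $\|P-\tilde{P}\|=\|P\tilde{Q}\|=\|\tilde{P}Q\|$.

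The main obstacle is the first step: one must either build Jordan's decomposition from scratch (by diagonalizing the self-adjoint operator $P\tilde{P}P$ on $\mathrm{range}\,P$ and assembling the two-dimensional invariant subspaces from its spectrum) or cite it cleanly, and in the infinite-dimensional case one must handle principal angles accumulating near $\pi/2$ so the suprema are controlled and the rank balancing (deficiency indices) is justified. An alternative route first proves the general identity $\|P-\tilde{P}\|=\max\{\|P\tilde{Q}\|,\|\tilde{P}Q\|\}$ and then shows $\|P\tilde{Q}\|=\|\tilde{P}Q\|$ by arguing that these operators share the same nonzero singular values when $\mathrm{rank}\,P=\mathrm{rank}\,\tilde{P}$; since this rests on the same spectral symmetry, I would favor the transparent block computation above.
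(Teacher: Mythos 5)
Your proof is correct (in finite dimensions) and takes a genuinely different route from the paper's. You reduce everything to Jordan's two-projection canonical form (equivalently the CS decomposition), compute the three norms block by block, and let the equal-rank hypothesis enter as a counting argument pairing the mismatched $(1,0)$ and $(0,1)$ blocks. The paper instead argues basis-free: from equal rank it takes a unitary $U$ with $UPU^{\dagger}=\tilde{P}$ (hence $UQU^{\dagger}=\tilde{Q}$), proves $\|P\tilde{Q}\|=\|Q\tilde{P}\|$ via the identities $\|A\|^{2}=\|AA^{\dagger}\|$ and the isospectrality of $AA^{\dagger}$ and $A^{\dagger}A$ applied to $A=PUP$, and then obtains $\|P-\tilde{P}\|=\max\{\|P\tilde{Q}\|,\|\tilde{P}Q\|\}$ from $P-\tilde{P}=P\tilde{Q}-Q\tilde{P}$ together with the orthogonality of the supports of $\tilde{Q}P\tilde{Q}$ and $\tilde{P}Q\tilde{P}$; this is essentially the ``alternative route'' you sketch and set aside in your final paragraph. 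Your approach buys transparency: the principal-angle picture makes the geometry visible, shows that $\|P-\tilde{P}\|=\max\{\|P\tilde{Q}\|,\|\tilde{P}Q\|\}$ holds for arbitrary projectors, and isolates exactly where equal rank is needed. The paper's approach buys economy of prerequisites: it needs no canonical-form theorem, only elementary norm manipulations, so it is self-contained. One caveat, which you partially flag, applies to both arguments: when the ranks are infinite, ``equal rank'' alone is not enough. Your block balancing and the paper's existence of the intertwining unitary both really require rank \emph{and} corank to match, i.e., $\dim(\mathrm{ran}\,P\cap\ker\tilde{P})=\dim(\mathrm{ran}\,\tilde{P}\cap\ker P)$. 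For instance, in $\ell^{2}$ take $\tilde{P}=I$ and $P$ the projector onto the closed span of $\{e_{2},e_{3},\dots\}$: both have infinite rank, yet $\|P\tilde{Q}\|=0$ while $\|\tilde{P}Q\|=\|P-\tilde{P}\|=1$. In the paper's application this is harmless, because Lemma~\ref{First_lem} supplies $\|P-P_{0}\|<1$, which excludes mismatched blocks (principal angle $\pi/2$) altogether; if you want your version to cover that use case cleanly, state the hypothesis as unitary equivalence of the pairs $(P,Q)$ and $(\tilde{P},\tilde{Q})$, or as $\|P-\tilde{P}\|<1$, rather than as equality of ranks.
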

See Sec.~\ref{app:lemma23-proof} for the proof.

\subsubsection{Bounds on the effect of perturbations}

Let $\mc{I}_0\subseteq \mathbb{R}$ be any interval containing one or more eigenvalues of a Hermitian operator $H_0$.  Suppose the eigenvalues of $H_0$ in the interval $\mc{I}_0$ are separated from the rest of the eigenvalues of $H_0$ by at least $\Delta E$. I.e., for any pair of eigenvalues  $\lambda_1$ and $\lambda_2$ of $H_0$, if $\lambda_1\in \mc{I}_0$ and  $\lambda_2\notin \mc{I}_0$, then $|\lambda_1-\lambda_2|\ge \Delta E$.  

\begin{mylemma}
\label{First_lem}
Let $V$ be a Hermitian operator satisfying $0<\|V\|< \Delta E/2$. Let $\mc{I}$ be the interval obtained from $\mc{I}_0$ by adding a margin of $\|V\|$ on the left and the right of $\mc{I}_0$. Let $P_0$ and $P$ be, respectively, the projectors onto the subspaces spanned by eigenstates of $H_0$ in the interval $\mc{I}_0$ and  eigenstates of $H_0+V$ in the interval $\mc{I}$. Then, $P_0$ and $P$ have the same rank and
\beq
\|P-P_0\|\le \frac{{2}\|V\|}{\Delta E}\ .  
\eeq
\end{mylemma}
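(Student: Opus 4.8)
The plan is to reduce the claim to a $\sin\Theta$-type perturbation bound on the off-diagonal block $QP_0$, where $Q\equiv I-P$. First I would establish that $P_0$ and $P$ have equal rank, so that Lemma~\ref{lem:23} applies and $\|P-P_0\|=\|QP_0\|$. Let $a$ and $b$ be the smallest and largest eigenvalues of $H_0$ lying in $\mc{I}_0$, and consider the homotopy $H_s\equiv H_0+sV$, $s\in[0,1]$. By Weyl's inequality every eigenvalue of $H_s$ lies within $s\|V\|\le\|V\|$ of an eigenvalue of $H_0$; hence the eigenvalues descending from $\mc{I}_0$ stay in $[a-\|V\|,b+\|V\|]\subseteq\mc{I}$, while those descending from outside $\mc{I}_0$, being at distance $\ge\Delta E$ from $[a,b]$, stay at distance $\ge\Delta E-\|V\|>\|V\|$ from $[a,b]$ and therefore outside $\mc{I}$. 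Thus no eigenvalue of $H_s$ crosses the fixed values $a-\Delta E/2$ or $b+\Delta E/2$ (both strictly between the two clusters since $\|V\|<\Delta E/2$), so the enclosed eigenvalue count is constant in $s$, giving $\mrm{rank}(P)=\mrm{rank}(P_0)$ and identifying $P$ as the spectral projector of $H_0+V$ onto exactly that cluster.

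Next I would locate the two relevant spectra. Writing $A\equiv H_0$ and $B\equiv H_0+V$, the eigenvalues of $A$ in the range of $P_0$ lie in $[a,b]$, while by the argument above the eigenvalues of $B$ in the range of $Q$ lie outside the open interval $(a-\delta,\,b+\delta)$ with $\delta\equiv\Delta E-\|V\|>0$. The block $X\equiv QP_0$ then solves a Sylvester equation: using $[B,Q]=0$, $[A,P_0]=0$ and $B-A=V$,
\beq
BX-XA=Q(B-A)P_0=QVP_0 .
\eeq
Because $A$ and $B$ are Hermitian and their relevant spectra are separated by $\delta$ in the configuration required by the Davis--Kahan $\sin\Theta$ theorem (source spectrum contained in $[a,b]$, target spectrum excluded from its symmetric $\delta$-neighborhood), the solution obeys $\|X\|\le\|QVP_0\|/\delta\le\|V\|/\delta$. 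Concretely, in the eigenbases of $A$ and $B$ one has $X_{ij}=(QVP_0)_{ij}/(\beta_i-\alpha_j)$ with $|\beta_i-\alpha_j|\ge\delta$, and the sharp operator-norm estimate for this separated configuration produces the factor $1/\delta$ with no further constant.

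Finally I would combine the pieces: $\|P-P_0\|=\|QP_0\|\le\|V\|/(\Delta E-\|V\|)$, and since $\|V\|<\Delta E/2$ forces $\Delta E-\|V\|>\Delta E/2$, this is bounded by $2\|V\|/\Delta E$, as claimed.

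The main obstacle I anticipate is securing the \emph{sharp} constant $1$ (rather than $\pi/2$ or $\sqrt2$) in the Sylvester/$\sin\Theta$ estimate: a naive splitting of $Q$ into its parts above and below $\mc{I}$ followed by a triangle inequality loses a factor, and only the correct operator-norm bound for the full ``annular'' separation---source spectrum in $[a,b]$, target spectrum excluded from $(a-\delta,b+\delta)$---delivers the constant $1$ needed for the final estimate to close across the whole range $\|V\|<\Delta E/2$. A secondary technical point is handling eigenvalues that might sit exactly on $\partial\mc{I}$ during the rank count, which the fixed mid-gap crossings at $a-\Delta E/2$ and $b+\Delta E/2$ circumvent.
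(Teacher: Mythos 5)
Your proof is correct, and it is worth noting that the paper itself offers no proof of Lemma~\ref{First_lem}: it simply cites Lemma 3.1 of Ref.~\cite{Bravyi20112793} and Theorem VII.3.2 of Ref.~\cite{Bhatia:book}. What you have produced is essentially a self-contained reconstruction of the latter route: Bhatia's Theorem VII.3.2 is exactly the Davis--Kahan $\sin\Theta$ bound for the configuration you describe (source spectrum inside $[a,b]$, target spectrum outside $(a-\delta,b+\delta)$), and your three steps --- rank equality via the Weyl/homotopy argument, the reduction $\|P-P_0\|=\|QP_0\|$ via Lemma~\ref{lem:23}, and the Sylvester-equation estimate $\|QP_0\|\le \|V\|/(\Delta E-\|V\|)\le 2\|V\|/\Delta E$ --- are each sound. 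You are also right that the sharp constant $1$ for the annular configuration is essential: splitting $Q$ into its parts above and below $\mc{I}$ and using half-line bounds plus the triangle inequality would cost a factor of $\sqrt{2}$ or $2$, and the final inequality would no longer close over the full range $\|V\|<\Delta E/2$. The alternative standard route, which the paper sketches in its remark after Lemma~\ref{main:lemma}, is the resolvent representation $P-P_0=-\frac{1}{2\pi i}\oint_\Gamma \left[R(z)-R_0(z)\right]dz$ over a contour encircling $\mc{I}$; your approach trades contour estimates for the (equally standard, but external) $\sin\Theta$ input. One caveat you should make explicit: your step ``and therefore outside $\mc{I}$'' tacitly assumes that $\mc{I}_0$ hugs its eigenvalues, i.e., extends less than $\Delta E-2\|V\|$ beyond $a$ and $b$; if $\mc{I}_0$ were allowed to reach much closer to the outside spectrum, a perturbed outside eigenvalue could enter $\mc{I}$, the ranks would differ, and the lemma as literally stated would fail. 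This tight reading is clearly what is intended (and is how the lemma is applied in the paper), but your proof is exactly where that assumption silently enters, so it deserves a sentence.
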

This is Lemma 3.1 of Ref.~\cite{Bravyi20112793}. (See also Theorem VII.3.2 of Ref.~\cite{Bhatia:book}).  

\begin{mylemma}
\label{main:lemma}
Let $V(t)$ be an arbitrary differentiable Hermitian operator satisfying $0<\|V(t)\| < \Delta E/2$. Let $\mc{I}(t)$ be the interval obtained from $\mc{I}_0$ by adding a margin of $\|V(t)\|$ on the left and the right of $\mc{I}_0$.
Let $P(t)$ be the projector onto the subspace spanned by the eigenvectors of $H_0+V(t)$ whose eigenvalues belong to $\mc{I}(t)$. Then,
\bes
\begin{align}
\label{Eq1:lema2} 
\|\dot{P}(t)\|&\le \frac{ {2} \| P(t) \dot{V}(t) Q(t)\|}{\Delta E- 2\|V(t)\|}\\ 
\label{Eq2:lema2} 
&\le \frac{{2}  \| \dot{V}(t)\|}{\Delta E- 2\|V(t)\|}\ .
\end{align}
\ees

\end{mylemma}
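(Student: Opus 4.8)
The plan is to reduce the bound on $\|\dot{P}(t)\|$ to a norm estimate for a Sylvester equation governed by the spectral gap of $H(t)=H_0+V(t)$. First I would exploit the block structure of $\dot P$. Differentiating the idempotency relation $P(t)^2=P(t)$ gives $\dot P = \dot P\,P + P\,\dot P$, whence $P\dot P P = 0 = Q\dot P Q$. Thus, in the orthogonal decomposition $\mathcal H = \mathrm{range}(P)\oplus\mathrm{range}(Q)$, the Hermitian operator $\dot P$ is purely block-antidiagonal, $\dot P = P\dot P Q + (P\dot P Q)^\dagger$. Since the operator norm of such an antidiagonal Hermitian operator equals the norm of a single block, $\|\dot P(t)\| = \|P(t)\dot P(t)Q(t)\|$, which already isolates exactly the quantity on the right-hand side of \eqref{Eq1:lema2}.

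Next I would derive an equation for $M\equiv P\dot P Q$. Because $P(t)$ is the spectral projector of $H(t)$ for an isolated part of the spectrum, $[H(t),P(t)]=0$; differentiating and using $\dot H=\dot V$ yields $[H,\dot P]=[P,\dot V]$. Sandwiching between $P$ and $Q$ and using $PHQ=0$ to replace $PH$ by $H_P\equiv PHP$ and $HQ$ by $H_Q\equiv QHQ$ collapses this to the Sylvester equation
\begin{equation}
H_P\,M - M\,H_Q = P\,\dot V\,Q .
\end{equation}
The content of the lemma is therefore a bound on the solution of this equation in terms of the separation of $\mathrm{spec}(H_P)$ and $\mathrm{spec}(H_Q)$.

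The spectral input comes from Lemma~\ref{First_lem} together with Weyl's perturbation inequality: every eigenvalue of $H=H_0+V$ lies within $\|V\|$ of an eigenvalue of $H_0$, so the eigenvalues captured by $P$ stay within $\|V\|$ of $\mathcal I_0$ while those captured by $Q$ stay within $\|V\|$ of the complementary spectrum, a distance $\ge\Delta E$ away. Hence $\mathrm{dist}\big(\mathrm{spec}(H_P),\mathrm{spec}(H_Q)\big)\ge \Delta E - 2\|V(t)\|\equiv d>0$. To turn this separation into an operator-norm bound I would split $Q=Q^{+}+Q^{-}$ into the parts of $\mathrm{spec}(H)$ lying above and below the band, so that $M=M^{+}+M^{-}$ with $M^{\pm}=P\dot P Q^{\pm}$ solving $H_P M^{\pm}-M^{\pm}H_{Q^{\pm}}=P\dot V Q^{\pm}$. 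For each sign the two spectra are now separated by a single line, which licenses the convergent integral representations $M^{+}=-\int_0^\infty e^{sH_P}\,(P\dot V Q^{+})\,e^{-sH_{Q^{+}}}\,ds$ and $M^{-}=\int_0^\infty e^{-sH_P}\,(P\dot V Q^{-})\,e^{sH_{Q^{-}}}\,ds$; differentiating under the integral verifies they solve the equations, and the uniform decay $e^{-sd}$ of the integrand gives $\|M^{\pm}\|\le \|P\dot V Q^{\pm}\|/d\le \|P\dot V Q\|/d$.

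Finally I would recombine the two pieces. Since $M^{+}$ and $M^{-}$ depend only on $Q^{+}v$ and $Q^{-}v$ respectively, the triangle inequality gives $\|Mv\|\le \|M^{+}\|\,\|Q^{+}v\|+\|M^{-}\|\,\|Q^{-}v\|\le (2\|P\dot V Q\|/d)\,\|v\|$, which is precisely the tighter inequality~\eqref{Eq1:lema2}; the factor $2$ is the price of crudely combining the two one-sided representations. The second inequality~\eqref{Eq2:lema2} then follows from $\|P\dot V Q\|\le\|\dot V\|$. I expect the Sylvester step to be the main obstacle: because $\mathrm{spec}(H_Q)$ straddles $\mathrm{spec}(H_P)$ on both sides there is no single separating line, and it is exactly the decomposition $Q=Q^{+}+Q^{-}$ into one-sided pieces that makes the integral representations converge and yields the stated constant. (The differentiability of $P(t)$ needed to speak of $\dot P$ is guaranteed by the contour representation of the gapped projector together with the assumed differentiability of $V$.)
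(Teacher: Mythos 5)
Your proof is correct, but it follows a genuinely different route from the paper's. The paper proves Lemma~\ref{main:lemma} by a finite-increment argument: it splits the increment $\Delta H_\tot(t)=H_\tot(t+\Delta t)-H_\tot(t)$ into its block-diagonal and block-off-diagonal parts with respect to $P(t),Q(t)$, notes that the block-diagonal part cannot move the spectral projector (it only shrinks the effective gap by $\|\Delta H^{(\text{Diag})}_\tot(t)\|$, which vanishes as $\Delta t\to 0$), applies the static perturbation bound of Lemma~\ref{First_lem} to the off-diagonal part alone, and then takes $\Delta t\to 0$; both the factor $2$ and the denominator $\Delta E-2\|V(t)\|$ are inherited wholesale from Lemma~\ref{First_lem}. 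You work instead at the differential level: $P^2=P$ makes $\dot P$ block-off-diagonal so that $\|\dot P\|=\|P\dot P Q\|$; differentiating $[H,P]=0$ gives the Sylvester equation $H_P M-M H_Q=P\dot V Q$ for $M=P\dot P Q$; and the splitting $Q=Q^{+}+Q^{-}$ with the one-sided Lyapunov-type integral representations yields $\|M^{\pm}\|\le \|P\dot V Q\|/(\Delta E-2\|V\|)$, hence the claim. All of this checks out (the integrals do solve the one-sided equations, with uniform $e^{-sd}$ decay), and your route --- essentially a real-variable version of the resolvent alternative the paper itself mentions at the end of its proof --- is self-contained: the derivative bound needs only Weyl's inequality for the spectral separation rather than the Bravyi-type Lemma~\ref{First_lem}, the off-diagonal refinement $\|P\dot V Q\|$ emerges directly from the commutator structure, and since $Q^{+}v\perp Q^{-}v$ your recombination step can even deliver the constant $\sqrt{2}$ in place of $2$. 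What the paper's route buys is economy: it recycles Lemma~\ref{First_lem}, which the rest of the argument needs anyway, and avoids any Sylvester machinery. One point you should make explicit in a full write-up: identifying $M^{\pm}$ with the integral representations requires uniqueness of the solution of the Sylvester equation, which holds precisely because $\mathrm{spec}(H_P)$ and $\mathrm{spec}(H_{Q^{\pm}})$ are disjoint.
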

See Sec.~\ref{app:main:lemma-proof} for the proof.
 
\subsection{Proof of the first part of Theorem~\ref{Thm:time:closed} [bound \eqref{bound-main-time2}]}
\label{app:Thm:time:closed-proof}

First, using the triangle inequality, we note that for an arbitrary projector $\Lambda$ it holds that 
\bes
\begin{align}
\|(I-\Lambda) U(t)\Lambda \| &\le  \|(I-\Lambda) U(t)P(0^+)\|+   \|(I-\Lambda) U(t)\left[ \Lambda-P(0^+) \right] \|\\ &\le  \|(I-\Lambda) U(t)P(0^+)\|+   \| \Lambda-P(0^+) \|\ ,
\end{align}
\ees
where to get the last inequality we used the fact that $I-\Lambda$ is also a projector and so $\|I-\Lambda\|=1$, together with the unitary invariance of the operator norm. Similarly, we find
\bes
\begin{align}
\|(I-\Lambda) U(t)P(0^+)\|  &\le  \| [I-P(t)] U(t)P(0^+)\|+\| [P(t)-\Lambda] U(t)P(0^+)\|\\ 
&\le  \| [I-P(t)] U(t)P(0^+)\| + \| P(t)-\Lambda \|   .
\end{align}
\ees
Combining these two bounds we find
\bes
\begin{align}
\|(I-\Lambda) U(t)\Lambda \| &\le    \| [I-P(t)] U(t)P(0^+)\| + \| P(t)-\Lambda \|  +   \| \Lambda-P(0^+) \|\ \label{bound43}.
\end{align}
\ees
Then, we observe that
\bes
\begin{align}
  \| [I-P(t)] U(t)P(0^+)\|&=  \| U(t)P(0^+)-P(t) U(t)P(0^+)\|= \| \left[ U(t)P(0^+)-P(t) U(t) \right] P(0^+)\|\\ &\le  \| U(t)P(0^+)-P(t) U(t)\|\ .
\end{align}
\ees
Substituting this into bound~\eqref{bound43} we find
\bes
\begin{align}
\label{bound_tri}
\|(I-\Lambda) U(t)\Lambda \| &\le    \| U(t)P(0^+)-P(t) U(t)\|  + \| P(t)-\Lambda \|  +   \| \Lambda-P(0^+) \|\  .
\end{align}
\ees
The above bound holds for any projector $\Lambda$. Choosing $\Lambda=P_0$, we can easily see that bound \eqref{bound-main-time2} follows immediately using this bound together with lemmas \ref{First_lem}, \ref{main:lemma} and \ref{lem:S}.

\subsection{Proof of the second part of Theorem~\ref{Thm:time:closed} [bound~\eqref{bound-main-time}]}
\label{app:Thm:time:closed-proof-part2}

\begin{proof}
First, without loss of generality we assume the minimum energy in $\mc{I}$ is zero, and so the maximum eigenvalue in $\mc{I}$ is $\delta E$.\footnote{If this is not the case, we can always shift the Hamiltonian by a constant and make the minimum energy  in $\mc{I}$ equal to zero.}
Let $\mc{I}(t)\subseteq \mathbb{R}$ be the interval obtained from $\mc{I}$ by adding the margin $\|V(t)\|$ on the left and the right of $\mc{I}$. Let ${P}(t)$ be the projector onto the subspace spanned by the eigenstates of $H(t)$ whose eigenvalues are in $\mc{I}(t)$. Then, since by assumption $ 2\|V(t)\|<\Delta E$, it follows from Lemma~\ref{First_lem} that the rank of $P(t)$ is the same as the rank of $P_0$, the projector onto the eigenstates of $H_0$ with eigenvalues in $\mc{I}$. Furthermore, Lemmas~\ref{lem:23} and \ref{First_lem} imply 
 \beq
 \label{bb2}
 \|Q(t) P_0\|= \|P(t) Q_0\|=  \|P(t)-P_0\|\le \frac{{2}  \|V(t)\|}{\Delta E}\ ,
 \eeq
where $Q(t)=I-P(t)$ and $Q_0=I-P_0$\ .
  
Define the Hamiltonian $H_\tru(t)$ to be the truncated version of $H(t)$ in which we have removed all the energies of $H(t)$ outside $\mc{I}(t)$, i.e.,
\beq
H_\tru(t)=P(t) H(t) P(t)\ .
\eeq
Let $U_\tru(t)$ be the unitary generated by $H_\tru(t)$, i.e.,  $dU_\tru(t)/dt=-i H_\tru(t)U_\tru(t)$, and $U_\tru(0)=I$. Let $P(0^{+})$ be the projector onto the subspace spanned by the eigenstates of $H(0^+)=H_0+V(0^{+})$ whose eigenvalues are in $\mc{I}(0^+)$, where $H_0+V(0^+)$ is the Hamiltonian of the system immediately after the perturbation is turned on. Then, we have
\bes
\begin{align}
\|{U}_\infty(t)P_0-U(t)P_0\|&\le \|U_\infty(t)P_0-U_\tru(t)P_0 \|+\|U_\tru(t)P_0-U(t)P_0 \|\\ 
&\le \|U_\infty(t)-U_\tru(t)\|+\|\left[{U}(t)-U_\tru(t)\right] \left[P(0^+)+\left(P_0-P(0^+)\right)\right]P_0\|\\ &\le \|U_\infty(t)-U_\tru(t)\|+\|\left[{U}(t)-U_\tru(t)\right]P(0^+)\| +2 \|P_0-P(0^+)\|\ \label{bound_Ham_Tri}\ .
\end{align}
\ees
Using Eq.~\eqref{bb2} we find
\begin{align}\label{bound_Ham_Tri2}
\|{U}_\infty(t)P_0-U(t)P_0\|\le \|U_\infty(t)-U_\tru(t)\|+\|{U}(t)P(0^+)-U_\tru(t)P(0^+)\|+ \frac{{4}  \|V(0^+)\|}{\Delta E}\ .
\end{align}
Lemma~\ref{lem:S} already provides us with an upper bound on $\|{U}(t)P(0^+)-U_\tru(t)P(0^+)\|$. 
Thus, to prove bound~\eqref{bound-main-time} it remains to find an upper bound on $\|U_\tru(t)-U_\infty(t)\|$. Indeed, we prove below that
\beq
\label{eq:E16}
\|U_\tru(t)-U_\infty(t)\|\le  {4} \int_{0^+}^{|t|} d\tau\   \frac{\|V(\tau)\|\left(\delta E+\|V(\tau)\|\right)}{\Delta E}\ ,
\eeq
Substituting this bound and bound~\eqref{bound_dot_P} into inequality~\eqref{bound_Ham_Tri2} we find
\begin{align}
\|{U}_\infty(t)P_0-U(t)P_0\|\le {4} \int_{0^+}^{|t|} d\tau\   \frac{\|V(\tau)\|\left(\delta E+\|V(\tau)\|\right)}{\Delta E}+ 2 \int_{0^+}^{|t|} d\tau \|\dot{P}(\tau)\|+ \frac{{4} \|V(0^+)\|}{\Delta E}\ .
\end{align}
Combining this with Lemma~\ref{main:lemma}, which puts a bound on the norm of $\|\dot{P}(\tau)\|$,  proves the second part of Theorem~\ref{Thm:time:closed}. It thus remains to prove bound~\eqref{eq:E16}, which we do next.

To prove bound~\eqref{eq:E16},  we recall that ${U}_\infty(t)$ and $U_\tru(t)$ are the unitaries generated by ${P}_0{H}(t){P}_0 $ and ${P}(t){H}(t){P}(t)$ respectively. So, we first find a bound on the difference of these two Hamiltonians. 
To do this, we note that
\begin{align}
\label{be1}
{P}(t){H}(t){P}(t) &={P}(t){H}(t)={P}(t){H}(t)P_0+{P}(t){H}(t)Q_0={P}(t){H}(t)P_0+{P}(t)H(t)P(t)Q_0\ .
\end{align}
By definition $P(t)$ is the projector onto the subspace spanned by the eigenstates of $H(t)$ whose eigenvalues are in $\mc{I}(t)$. Recall that $\mc{I}(t)$ is the interval obtained by adding the margin of  $\|V(t)\|$ to the interval $\mc{I}$. So, it follows that all the eigenvalues of $H(t)$ in $\mc{I}(t)$ are between  $-\|V(t)\|$ and $\delta E+\|V(t)\|$. Thus all the eigenvalues of $P(t)H(t)P(t)$ are likewise between $-\|V(t)\|$ and $\delta E+\|V(t)\|$, and so $\|P(t)H(t)P(t)\|\le \delta E+\|V(t)\|$. Therefore, using Eq.~\eqref{be1} we find
\bes
\label{boundP2}
\begin{align}
\|{P}(t){H}(t){P}(t)-{P}(t){H}(t)P_0\|&=\|{P}(t)H(t)P(t)Q_0\|\le \|{P}(t)H(t)P(t)\|\|P(t)Q_0\|\\
& \le  \left(\delta E+\|V(t)\|\right) \|P(t)Q_0\|\ .
\end{align}
\ees
Next, we observe that
\begin{align}
{P}_0{H}(t){P}_0 &={P}(t){H}(t)P_0+[ P_0-{P}(t) ] {H}(t)P_0 \ .
\end{align}
Therefore 
\bes
\label{boundP1}
\begin{align}
\| {P}_0{H}(t){P}_0-{P}(t){H}(t)P_0\| &=\|\left[P_0-{P}(t)\right]\left[V(t)P_0+H_0P_0\right]\| \\
&\le \|P_0-{P}(t)\|\|V(t)P_0+H_0P_0\|\\
&\le \|P_0-{P}(t)\| (\|V(t)\|+\|H_0P_0\|)  \\
\label{eq:E22d}
&\le \|P_0-{P}(t)\|(\|V(t)\|+\delta E)\\
\label{eq:E22e}
&= \|P(t)Q_0\|(\|V(t)\|+\delta E) \ ,
\end{align}
\ees
where to get inequality~\eqref{eq:E22d} we used the fact that $P_0$ is the projector onto the eigenstates of $H_0$ with energy in $\mc{I}_0$, which is between $0$ and $\delta E$,  and to get inequality~\eqref{eq:E22e} we used Lemma~\ref{lem:23}.

Combining bounds~\eqref{boundP2} and \eqref{boundP1} we find
\bes
\label{boundP8}
\begin{align}
\| {P}_0{H}(t){P}_0-{P}(t){H}(t)P(t)\| &\le\| {P}_0{H}(t){P}_0-{P}(t){H}(t)P_0\| +\|{P}(t){H}(t){P}(t)-{P}(t){H}(t)P_0\| \\
&\le 2\left(\|V(t)\|+\delta E\right)\|P(t)Q_0\| \ .
\end{align}
\ees
Using Eq.~\eqref{bb2} we have $\|P(t)Q_0\|\le\frac{{2}  \|V(t)\|}{\Delta E}$, and therefore
\begin{align}
\| {P}_0{H}(t){P}_0-{P}(t){H}(t)P(t)\|&\le {4} \frac{\|V(t)\|(\delta E+\|V(t)\|)}{\Delta E}
\end{align}
Finally, since ${P}(t){H}(t)P(t)$ generates the unitary $U_\tru(t)$ and  ${P}_0{H}(t){P}_0$ generates the unitary  ${U}_\infty(t)$ with the initial conditions $U_\infty(0)=U_\tru(0)=I$,  using Lemma \ref{lemma3} we have $\|{U}_\infty(t)P_0-U(t)P_0\|\le \int_0^{|t|}\ \| {P}_0{H}(\tau){P}_0-{P}(\tau){H}(\tau)P(\tau)\|$, and bound~\eqref{eq:E16} follows, as claimed.
\end{proof}

\section{Proof of the preliminary lemmas}
\label{app:lemma-proofs}

\subsection{Proof of Lemma~\ref{lemma3}}
\label{app:lemma3-proof}
The proof of this Lemma is given, e.g., in Ref.~\cite{Ng:2011dn} and is reproduced here for completeness.

\begin{proof}
The proof is a straightforward application of the unitary invariance of the operator norm together with the triangle inequality:
\bes
\begin{align}
\|U_1(t)-U_2(t)\| &= \|U_2^\dag(t)U_1(t)-I\|\\ &\le \|\int_{0}^{t} ds \frac{d}{ds} (U_2^\dag(s)U_1(s)) \|
\\ &\le   \|\int_{0}^{t} ds  \left[U_2^\dag(s) H_2(s) U_1(s)- U_2^\dag(s) H_1(s) U_1(s) \right] \|\\ &\le  \int_{0}^{|t|} ds  \|\left[U_2^\dag(s) H_2(s) U_1(s)- U_2^\dag(s) H_1(s) U_1(s) \right] \| \\ &\le  \int_{0}^{|t|} ds  \| H_2(s)-H_1(s)\|\ .
\end{align}
\ees
\end{proof}

\subsection{Proof of Lemma~\ref{lem:S} }
\label{app:Pdot-proof}

\subsubsection{Proof of bound~\eqref{bound_dot_Pi}} 
\begin{proof}
Let $S(t) = U^\dagger(t)P(t) U(t)$. Then
\bes
\begin{align}
\|U(t) P({0^+})- P(t)U(t)\| &= \| U^\dagger(t)P(t)U(t) - P({0^+})\| = \| S(t) - S(0^+) \| = \left\| \int_{0^+}^t \dot{S}(\tau) \ d\tau \right\| \\
& = \left\| \int_{0^+}^t \left( i U^\dagger(\tau) F(\tau) P(\tau) U(\tau) + U^\dagger(\tau) \dot{P} U(\tau) -i U^\dagger(\tau) P(\tau) H(\tau) U(\tau) \right) \ d\tau \right\| \\
\label{E11c}
& = \left\| \int_{0^+}^t U^\dagger(\tau) \dot{P}(\tau) U(\tau)  \ d\tau \right\| \\
&\leq \int_{0^+}^{|t|} \| U^\dagger(\tau) \dot{P}(\tau) U(\tau)\| \ d\tau = \int_{0^+}^{|t|} \|  \dot{P}(\tau) \| \ d\tau \,
\end{align}
\ees
where to get Eq.~\eqref{E11c} we used the premise that $[P(\tau), H(\tau)]=0$ $\forall \tau$.
\end{proof}

\subsubsection{Proof of bound~\eqref{bound_dot_P}} 
\begin{proof}
First, using the triangle inequality we find
\bes
\begin{align}
\|U(t) P({0^+})- U_\tru(t)P(0^+)\| & \leq  \|U(t) P({0^+})- P(t)U_\tru(t)\| + \|P(t)U_\tru(t) - U_\tru(t)P(0^+)\| \\
&= \|U^\dagger(t)P(t)U_\tru(t) - P({0^+})\| + \|U^\dagger_\tru(t)P(t)U_\tru(t) - P(0^+)\|\ .
\end{align}
\ees
By bound~\eqref{bound_dot_Pi} we already know that the second term satisfies
\beq
\|U^\dagger_\tru(t)P(t)U_\tru(t) - P(0^+)\| \leq \int_{0^+}^{|t|} \|\dot{P}(\tau)\| d\tau\ .
\eeq 
Now let $S(t) = U^\dagger(t)P(t)U_\tru(t)$. Then
\bes
\begin{align}
\|U^\dagger(t)P(t)U_\tru(t) - P({0^+})\| &=  \| S(t) - S(0^+) \| = \left\| \int_{0^+}^{|t|} \dot{S}(\tau) \ d\tau \right\| \\
& =  \left\| \int_{0^+}^{|t|}   \left( iU^\dagger(\tau)H(\tau) P(\tau) U_\tru(\tau) + U^\dagger(\tau) \dot{P}(\tau) U_\tru(\tau) -i U^\dagger(\tau)P(\tau)H_\tru(\tau)U_\tru(\tau)
\right)   \ d\tau \right\| \\
& =  \left\| \int_{0^+}^{|t|}   \left( iU^\dagger(\tau)H_\tru(\tau) U_\tru(\tau) + U^\dagger(\tau) \dot{P}(\tau) U_\tru(\tau) -i U^\dagger(\tau)H_\tru(t)U_\tru(\tau)
\right)   \ d\tau \right\|\label{bound645} \\
& \leq  \int_{0^+}^{|t|}   \|U^\dagger(\tau) \dot{P}(\tau) U_\tru(\tau) 
\|   \ d\tau =  \int_{0^+}^{|t|}  \| \dot{P}(\tau)\| \ d\tau \ ,
\end{align}
\ees
where to get bound \eqref{bound645} we used the definition $H_\tru(\tau)=P(t)H(t)P(t)=P(t)H(t)$. 
Combining these bounds we have the claimed result:
\begin{align}
\|U(t) P({0^+})- U_\tru(t)P(0^+)\|  \leq 2\int_{0^+}^{|t|}  \|  \dot{P}(\tau)\| \ d\tau \ .
\end{align}
\end{proof}

\subsection{Proof of Lemma~\ref{lem:23}}
\label{app:lemma23-proof}
\begin{proof}
First note that since  the supports of $P$ and $\widetilde{P}$ have the same dimension, then there exists a unitary $U$ such that $UPU^{\dag}=\widetilde{P}$, and therefore $UQU^{\dag}=\widetilde{Q}$. This implies
\beq
\|Q\widetilde{P}\|=\|QUPU^{\dag}\|= \|QUP\|.
\eeq
Using the fact that $P=PU^{\dag}(P+Q)UP$ we find that
\begin{align}
 \|P-PU^{\dag}PUP\|&=\|PU^{\dag}QUP\|= \|QUP\|^{2} =\|QUPU^{\dag}\|^{2}=\|Q\widetilde{P}\|^{2}\ ,
\end{align}
where we have used the fact that $\|AA^{\dag}\|=\|A\|^{2}$ for any operator $A$. Next, we use the fact that for any operator $A$, the operators $AA^{\dag}$ and $A^{\dag}A$ have the same eigenvalues. This implies that $PU^{\dag}PUP$ and $PUPU^{\dag}P$ have the same eigenvalues. It follows that
\beq
\|P-PUPU^{\dag}P\|= \|P-PU^{\dag}PUP\|=\|Q\widetilde{P}\|^{2}.
\eeq
Then, using the fact that $P=PU(P+Q)U^{\dag}P$ we find 
\beq
\|PUQU^{\dag}P\|=\|P-PUPU^{\dag}P\|=\|Q\widetilde{P}\|^{2},
\eeq
which implies 
\beq
\|PUQ\|=\sqrt{ \|PUQU^{\dag}P\|}=\|Q\widetilde{P}\|.
\eeq
The left-hand side is equal to $\|PUQ\|=\|PUQU^{\dag}\|=\|P\widetilde{Q}\|$. Therefore, we find $\|P\widetilde{Q}\|=\|Q\widetilde{P}\|$. 

To prove that $\|P-\widetilde{P}\|=\|P\widetilde{Q}\|=\|Q\widetilde{P}\|$, note that
\bes
\begin{align}
\|P-\widetilde{P}\|&= \|P-(P+Q)\widetilde{P}\|=\|P-P\widetilde{P}-Q\widetilde{P}\|=\|P\widetilde{Q}-Q\widetilde{P}\|=\sqrt{\| \widetilde{Q} P\widetilde{Q}+\widetilde{P} Q\widetilde{P} \|} \ ,
\end{align}
\ees
where to get the last equality we have used the fact that for any operator $A$, $\|A\|=\sqrt{\|A^\dag A \|}$. Since the supports of $\widetilde{P} Q\widetilde{P} $ and $\widetilde{Q} P\widetilde{Q}$ are orthogonal, it follows that 
\beq
\sqrt{\| \widetilde{Q} P\widetilde{Q}+\widetilde{P} Q\widetilde{P} \|}=\sqrt{\max\{\|\widetilde{Q} P\widetilde{Q}\| , \|\widetilde{P} Q\widetilde{P}\| \} }=\max\{\sqrt{\|\widetilde{Q} P\widetilde{Q}\|} , \sqrt{\|\widetilde{P} Q\widetilde{P}}\| \}=\max\{\| P\widetilde{Q}\|, \|\widetilde{P} Q\| \}\ .
\eeq
Since $\| P\widetilde{Q}\|=\|\widetilde{P} Q\|$ we find that
\begin{align}
\|P-\widetilde{P}\|&= \max\{\| P\widetilde{Q}\|, \|\widetilde{P} Q\| \}=\| P\widetilde{Q}\|=\|\widetilde{P} Q\|\ .
\end{align}

\end{proof}

\subsection{Proof of Lemma~\ref{main:lemma}}
\label{app:main:lemma-proof}

\begin{proof}
Recall that for any pair of eigenvalues  $\lambda_1$ and $\lambda_2$ of $H_0$, if $\lambda_1\in \mc{I}_0$ and  $\lambda_2\notin \mc{I}_0$, then $|\lambda_1-\lambda_2|\ge \Delta E$. Therefore, since $\|V(t)\|< \Delta E/2$, the number of orthonormal eigenstates of $H_0+V(t)$ in $\mc{I}$ is equal to the number of orthonormal eigenstates of $H_0$ in $\mc{I}_0$. 
Thus the rank of $P(t)$ is time-independent.  
Furthermore, the eigenvalues of $H_0+V(t)$ in $\mc{I}$ are separated from the rest of the spectrum of $H_0+V(t)$ by at least $\Delta E-2\|V(t)\|$. Let $H_\tot(t)\equiv H_0+V(t)$,  $\Delta H_\tot(t)\equiv H_\tot(t+\Delta t)-H_\tot(t)$\ and
\begin{align}
\Delta H^{(\text{Diag})}_\tot(t)&\equiv P(t) \Delta H_\tot(t) P(t)+ Q(t) \Delta H_\tot(t) Q(t)\\ 
\Delta H^{(\text{Off})}_\tot(t)&\equiv P(t) \Delta H_\tot(t) Q(t)+ Q(t) \Delta H_\tot(t) P(t)
\end{align}
Thus
\beq
H_\tot(t+\Delta t)= \left[ H_\tot(t) +\Delta H^{(\text{Diag})}_\tot(t)\right]+ \Delta H^{(\text{Off})}_\tot(t) \ .
\eeq
The terms inside the bracket are block-diagonal with respect to $P(t)$ and $Q(t)$. So, in the limit where $\Delta t$ is sufficiently small such that $\|\Delta H^{(\text{Diag})}_\tot(t)\|\le \Delta E-2\|V(t)\|$, we find that $P(t)$ is also the projector onto the subspace spanned by the eigenstates of $H_\tot(t) +\Delta H^{(\text{Diag})}_\tot(t)$\ . Note that the eigenvalues of $H_\tot(t) +\Delta H^{(\text{Diag})}_\tot(t)$ whose eigenvectors are inside the support of $P(t)$ and the eigenvalues whose eigenvectors are outside the support of $P(t)$ are separated from each other by at least $\Delta E- 2\|V(t)\|-\|\Delta H^{(\text{Diag})}_\tot(t)\|$. 

Next, we use Lemma~\ref{First_lem} to find the effect of adding $\Delta H^{(\text{Off})}_\tot(t)$ to $H_\tot(t) +\Delta H^{(\text{Diag})}_\tot(t)$. According to Lemma~\ref{First_lem}
\beq
\label{Eq:Deriv}
\|{P}(t+\Delta t)-P(t)\|\le \frac{{2}  \|\Delta H^{(\text{Off})}_\tot(t) \|}{\Delta E- 2\|V(t)\|-\|\Delta H^{(\text{Diag})}_\tot(t)\|}=\frac{{2}  \|P(t)\Delta H_\tot(t) Q(t)\|}{\Delta E- 2\|V(t)\|-\|\Delta H^{(\text{Diag})}_\tot(t)\|} \ .
\eeq
In the limit where $\Delta$ goes to zero, this implies
\beq
\label{Eq:Deriv2}
\|\dot{P}(t)\|\le \frac{{2}  \|P(t)\dot{H}_\tot(t) Q(t)\|}{\Delta E- 2\|V(t)\|}=\frac{{2}  \|P(t)\dot{V}(t) Q(t)\|}{\Delta E- 2\|V(t)\|}  \ .
\eeq
This proves bound~\eqref{Eq1:lema2}. Bound~\eqref{Eq2:lema2} follows since the largest eigenvalue of the off-diagonal part of $\|\dot{V}\|$ cannot be larger than the largest eigenvalue of $\|\dot{V}\|$. 

We note that, alternatively, one can prove Lemma~\ref{main:lemma} using standard resolvent techniques, i.e., using  
$P(t)=-\frac{1}{2\pi i}\int_\Gamma R(z,t) dz$,
and
$\dot{P}(t)=-\frac{1}{2\pi i}\int_\Gamma R(z,t) \dot{V}(t) R(z,t)  dz$,
where $R(z,t)\equiv (H_0+V(t)-z I)^{-1}$ is the resolvent, and $\Gamma$ is a properly chosen contour in the complex plane around $\mathcal{I}(t)$. See, e.g., \cite{Reichardt:2004}.
\end{proof}

\section{Proof of remaining statements in the main text}

\subsection{The rate of energy exchange between the system and the reservoir and the interpretation $\Omega(t)$}
\label{app:comment-energy-exchange}
In the main text we claimed that in the special case where 
\begin{enumerate}
\item $H_\B(t)=H_\B$ and hence also $\Omega(t)=\Omega$ (time-independent), and
\item $\mc{C}$ is the  bottom (top) energy sector, i.e., all energy levels in $\mc{C}^\perp$ have energy less (larger) than energy levels in $\mc{C}$,
\end{enumerate}
we can interpret $\Omega^{-1}$ as the minimum time the reservoir needs to transfer the required energy to move the system from $\mc{C}$ to  $\mc{C}^\perp$. Here we present the argument in more details. 

The key point in the argument is that in the time-independent case the total energy of system and reservoir together is a conserved quantity. So, if the energy of the system changes considerably, the energy of reservoir $\<E_B\> = \Tr[ \rho_\SB(t) H_\B]$ should also change, where $\rho_\SB(t)$ is the joint state. But the maximum rate of change of energy of reservoir is equal $\left|\frac{d}{dt}\<E_B\>\right|\le \|[H_\inter , H_\B]\|$, with equality for some $\rho_\SB(t)$. (This  follows easily from multiplying $\dot{\rho}_\SB = -i[H_\tot,\rho]$ by $H_\B$, taking the trace and using its cyclic property to cancel the terms not involving $H_\inter$, and using the inequality $|\Tr(AB)| \leq \|A\|\|B\|_1$ \cite{Bhatia:book}.)

Next we notice that the second assumption  implies that to move the system state from $\mc{C}$ to $\mc{C}^\perp$, the reservoir energy should change at least by $\Delta E-2\|H_\inter \|$ (note that adding $H_\inter $ to $H_\sys$ can shift each eigenvalue of $H_\sys$ by at most $\|H_\inter \|$, and so the gap can shrink to $\Delta E-2\|H_\inter\|$).  Since $\|[H_\inter , H_\B]\|$ is the maximum rate of  change of $\<E_B\>$, exchanging this amount of energy takes a time of at least $(\Delta E-2\|H_\inter\|)/\|[H_\inter , H_\B]\|$. So, under these assumptions $\Omega^{-1}$ (up to a factor of $1/2$) can be interpreted as the minimum time it takes the reservoir to transfer (absorb) the required energy to go from $\mc{C}$ to  $\mc{C}^\perp$. This explains the factor $t\Omega$  in bound~\eqref{bound_prob1} [and bound \eqref{main_bound_open}] in the time-independent case.

However, this simple argument for the minimum energy exchange time cannot explain some important aspects of bounds~\eqref{bound_prob1} and \eqref{main_bound_open}. In particular, it does not apply when $\mc{C}$ is neither the bottom nor the top energy sector. In this case the leakage can happen without any change in the average energy of the system and so one might expect that leakage can happen in a much shorter time. However, bound~\eqref{bound_prob1} shows that this is not the case. Nor does the simple argument explain the fact that the bound on the probability is quadratic in $\|H_\inter \|$, or the fact that the bound holds even for time-dependent Hamiltonians, where the energy is not generally conserved.

\subsection{Modification of $H_\sys$ by the reservoir as a beneficial effect}
\label{app:comment-K_sys}

In the main text we commented that in some cases the modification of $H_\sys$ by the reservoir can be a beneficial effect. This happens when $P_0 H_\inter P_0=P_{\mc{C}} K_\sys P_\mc{C}\otimes I_\B$ for some Hermitian system operator $K_\sys$. In this case, in the $\Delta E \rightarrow \infty$ limit the evolution of states inside $\mc{C}$ is described by the Hamiltonian $H_\sys+K_\sys$. This modification of the system Hamiltonian can be useful, e.g., in the context of quantum computation driven by dissipation \cite{Zanardi:2014fr}. We can use the argument leading to the bound \eqref{main_bound_open} to find how close the actual evolution is to the ideal evolution generated by $H_\sys+K_\sys$. Let $\tilde{\rho}_\id(t)=e^{-i t (H_\sys+K_\sys)} \rho(0) e^{i t (H_\sys+K_\sys)}$. Then, it turns out that for any state $\rho(0) \in \mc{C}$: 
\begin{align} 
\sqrt{ 1-F(\tilde{\rho}_\id(t),\rho(t) )}\le \frac{2}{\sqrt{2}}\left[\frac{\|H_\inter \|}{\ \Delta E}+ t \left(\frac{\|[H_\inter ,H_\B]\|}{\Delta E-2 \|H_\inter \|}+ \frac{\|H_\inter \|{(\delta E+\|H_\inter \|)}}{{\Delta E}}\right) \right]\ ,
\end{align} 
where we considered the case of time-independent $H_\B$ for simplicity. We note that the term $\frac{\|[H_\inter ,H_\B]\|}{\Delta E-2 \|H_\inter \|}$ does not appear in the Markovian analysis of Ref.~\cite{Zanardi:2014fr}.

\subsection{Proof of the inequality $F\!\left[\rho(t),\rho_\id(t)\right]\le \sqrt{1-p_\leak(t)}$}
\label{app:F-vs-p_leak}
Recall that we defined the leakage probability as $p_\leak(t)\equiv\Tr[\rho(t) Q_\mc{C}]$. Consider two states $\sigma$ [i.e., $\rho_\id(t)$] and $\tau$ [i.e., $\rho(t)$] where $P_\mc{C}\sigma P_\mc{C}=\sigma$ for some projector $P_\mc{C}$ ($P_\mc{C}=P_\mc{C}^2$, $Q_\mc{C}=I-P_\mc{C}$) with support $\mc{C}$. Considering the Taylor expansion of $\sqrt{\sigma}$, we can easily see that
\beq
\sqrt{\sigma}=\sqrt{\sigma}P_\mc{C}=P_\mc{C}\sqrt{\sigma}\ .
\eeq 
Define $\tau' \equiv P_\mc{C} \tau P_\mc{C} / \Tr(P_\mc{C} \tau)$. Then, 
\bes
\begin{align}
F(\tau,\sigma) &= \Tr(\sqrt{\sqrt{\sigma}\tau\sqrt{\sigma}}) = \Tr(\sqrt{\sqrt{\sigma}P_\mc{C}\tau P_\mc{C}\sqrt{\sigma}}) \\
&= \sqrt{\Tr(\tau P_\mc{C})}  \Tr\left(\sqrt{\sqrt{\sigma}   \frac{P_\mc{C} \tau P_\mc{C}}{\Tr(P_\mc{C} \tau)}\sqrt{\sigma}}\right) \\
& =  \sqrt{\Tr(\tau P_\mc{C})}  F(\tau',\sigma) \le \sqrt{\Tr(\tau P_\mc{C})} = \sqrt{1-\Tr(\tau Q_\mc{C})}\ .
\end{align}
\ees
In our case $\rho_\id(t) = P_\mc{C} \rho_\id(t) P_\mc{C}$ and $\Tr(\tau Q_\mc{C}) = p_\leak(t)$.

\end{document}